\documentclass{article}

    \PassOptionsToPackage{numbers, compress}{natbib}

 \usepackage[preprint]{neurips_2026}

\usepackage[utf8]{inputenc} %
\usepackage[T1]{fontenc}    %
\usepackage{hyperref}       %
\usepackage{url}            %
\usepackage{booktabs}       %
\usepackage{amsfonts}       %
\usepackage{nicefrac}       %
\usepackage{microtype}      %
\usepackage{xcolor}         %

\usepackage[textsize=tiny]{todonotes}

\usepackage{amsmath,amsfonts,amsthm,bm,bbm,xfrac,dsfont,stmaryrd,xfrac}

\DeclareMathOperator*{\bigtimes}{\vartimes}

\def\eqref#1{equation~\ref{#1}}

\def\peqref#1{(\ref{#1})}

\def\appxref#1{Appx.~\ref{#1}}

\def\1{\bm{1}}

\def\va{{\bm{a}}}

\def\vs{{\bm{s}}}

\def\vx{{\bm{x}}}

\def\vz{{\bm{z}}}

\DeclareMathAlphabet{\mathsfit}{\encodingdefault}{\sfdefault}{m}{sl}
\SetMathAlphabet{\mathsfit}{bold}{\encodingdefault}{\sfdefault}{bx}{n}

\DeclareMathOperator*{\argmax}{arg\,max}
\DeclareMathOperator*{\argmin}{arg\,min}

\makeatletter
\newtheorem*{rep@theorem}{\rep@title}
\newcommand{\newreptheorem}[2]{%
\newenvironment{rep#1}[1]{%
 \def\rep@title{#2 \ref{##1}}%
 \begin{rep@theorem}}%
 {\end{rep@theorem}}}
\makeatother

\newtheorem{theorem}{Theorem}
\newreptheorem{theorem}{Theorem}
\newtheorem*{theorem*}{Theorem}
\newtheorem{proposition}{Proposition}
\newreptheorem{proposition}{Proposition}
\newtheorem{corollary}{Corollary}
\newreptheorem{corollary}{Corollary}

\newreptheorem{lemma}{Lemma}

\newreptheorem{remark}{Remark}

\newtheorem{definition}{Definition}

{\end{list}
}
\usepackage{csquotes}

\usepackage{subcaption}

\usepackage[capitalize,noabbrev]{cleveref}

\newcommand{\tikzcircle}[2][black,fill=black]{\tikz[baseline=-0.5ex]\draw[#1,radius=#2] (0,0) circle ;}%

\usepackage{algorithm,algorithmic}
\makeatletter
\newcommand{\alglabel}[1]{%
  \addtocounter{ALC@line}{-1}%
  \refstepcounter{ALC@line}%
  \label{#1}%
}
\makeatother

\usepackage{subcaption, graphicx}

\usepackage{multirow}

\usepackage{comment}

\usepackage{enumitem}

\usepackage{calc}
\usepackage{ifthen}
\usepackage{tikz}
\usetikzlibrary{calc,arrows.meta} %
\usetikzlibrary{positioning} %
\usetikzlibrary{decorations.pathreplacing,patterns}
\usepackage[normalem]{ulem}

\newcommand{\tikzmark}[1]{\tikz[overlay,remember picture] \node (#1) {};}

\usepackage{twemojis}
\newcommand{\emoji}[1]{\raisebox{-0.1em}{\scalebox{1.5}{\twemoji{#1}}}}
\usepackage{mdframed} %

\definecolor{gdmblue}{HTML}{4285f4} %
\definecolor{gdmred}{HTML}{ea4335} %
\definecolor{maroon}{HTML}{FFA500}%
\definecolor{navyblue}{HTML}{008080}%

\colorlet{goldcolor}{orange!70!yellow}
\colorlet{silvercolor}{gray!60}
\colorlet{bronzecolor}{brown!60!black}

\newcommand{\gold}{\tikz[scale=0.15]\fill[goldcolor, draw=black] circle(0.5);}
\newcommand{\silver}{\tikz[scale=0.15]\fill[silvercolor, draw=black] circle(0.5);}
\newcommand{\bronze}{\tikz[scale=0.15]\fill[bronzecolor, draw=black] circle(0.5);}

\newcommand{\br}{\texttt{BR}}

\newcommand{\tpzs}{two-player, zero-sum}

\newcommand{\pref}{\rho}
\newcommand{\wtl}{\texttt{wtl}}
\newcommand{\target}{\mu}

\newcommand{\erock}{\emoji{rock}}
\newcommand{\epaper}{\emoji{page facing up}}
\newcommand{\escissors}{\emoji{scissors}}

\newcommand{\bear}{Bear} %
\newcommand{\ebear}{\emoji{bear}} %
\newcommand{\bearlabel}{\ebear~\bear} %

\newcommand{\dog}{Dog}
\newcommand{\edog}{\emoji{dog face}}
\newcommand{\doglabel}{\edog~\dog} %

\newcommand{\camel}{Rabbit}
\newcommand{\ecamel}{\emoji{rabbit face}}
\newcommand{\camellabel}{\ecamel~\camel}

\newcommand{\goose}{Frog}
\newcommand{\egoose}{\emoji{frog}}
\newcommand{\gooselabel}{\egoose~\goose}

\newcommand{\elephant}{Koala}
\newcommand{\eelephant}{\emoji{koala}}

\newcommand{\turtle}{Chicken}
\newcommand{\eturtle}{\emoji{chicken}}

\newcommand{\zebra}{Lion}
\newcommand{\ezebra}{\emoji{lion}}

\newcommand{\dolphin}{Pig}
\newcommand{\edolphin}{\emoji{pig face}}

{}
{}
{}
{}
{}

\title{Nash without Numbers: A Social Choice Approach to Mixed Equilibria in Context-Ordinal Games}

\author{%
  Ian Gemp \\
  Google DeepMind \\
  New York, NY, USA \\
  \texttt{imgemp@google.com} \\
  \And
  Crystal Qian \\
  Google DeepMind \\
  New York, NY USA \\
  \texttt{cjqian@google.com} \\
  \AND
  Marc Lanctot \\
  Google DeepMind \\
  Montreal, CA \\
  \texttt{lanctot@google.com} \\
  \And
  Kate Larson \\
  Google DeepMind \& \\
  University of Waterloo \\
  Waterloo, CAN \\
  \texttt{katelarson@google.com}
}

\begin{document}

\maketitle

\begin{abstract}
Nash equilibrium serves as a fundamental mathematical tool in economics and game theory. However, it classically assumes knowledge of player utilities, whereas economics generally regards preferences as more fundamental. To leverage equilibrium analysis in strategic scenarios, one must first elicit numerical utilities consistent with player preferences, a delicate and time-consuming process. In this work, we forgo precise utilities and generalize the Nash equilibrium to a setting where we only assume a player is capable of providing an \emph{ordinal} ranking of their actions within the \emph{context} of other players' joint actions. The key technical challenge is to rethink the definition of a \emph{best-response}. While the classical definition identifies actions maximizing expected payoff, we naturally look towards social choice theory for how to aggregate preferences to identify the most preferred actions. We define this generalized notion of a \emph{context-ordinal} Nash equilibrium, establish its existence under mild conditions on aggregation methods, introduce notions of regularization, approximation, and regret, explore complexity for simple settings, and develop learning rules for computing such equilibria. In doing so, we provide a generalization of Nash equilibrium and demonstrate its direct applicability to elicited preferences in human experiments.
\end{abstract}

\section{Introduction}\label{sec:intro}

Game theory seeks to define rational (utility-maximizing) behavior in the presence of rational co-players. However, not all strategic scenarios admit precise numerical utilities.
In elections, for instance, voters may strategically cast ballots in order to achieve desired election results. One can imagine pursuing tactical voting without ever ascribing any precise numerical value to each electoral outcome. We later study human data from such settings in experiments.

The dominant solution concept in game theory is the notion of a Nash equilibrium (NE), a strategy profile from which no single player has any incentive to deviate~\citep{nash1950equilibrium}. Traditionally, an incentive to deviate would mean that a player has an opportunity to take an action that achieves higher expected utility.
However, how can one compute an \emph{expected} utility in a setting where numerical utilities are not available?

\begin{figure}
    \centering
    \begin{tikzpicture}[scale=1.5]
    \draw[thick, fill=black!10] (0,0) -- (0:1) arc (0:90:1) -- cycle;
    \node at (45:0.62) {\textcolor{gdmred}{25\%} \erock};
    \draw[thick, fill=black!10] (0,0) -- (90:1) arc (90:198:1) -- cycle;
    \node at (144:0.55) {\textcolor{gdmred}{30\%} \epaper};
    \draw[thick, fill=black!10] (0,0) -- (198:1) arc (198:360:1) -- cycle;
    \node at (279:0.3) {\textcolor{gdmred}{45\%} \escissors};
    \node[anchor=west] (ballot1) at (1.15, 0.45) {%
        \textcolor{gdmblue}{$\{$\epaper$\succ$\erock$\succ$\escissors$\}$}%
    };
    \draw[thin, black!30] (25:1.02) -- (ballot1.west);
    \node[anchor=east] (ballot2) at (-1.2, 0.3) {%
        \textcolor{gdmblue}{$\{$\escissors$\succ$\epaper$\succ$\erock$\}$}%
    };
    \draw[thin, black!30] (135:1.02) -- (ballot2.east);
    \node[anchor=west] (ballot3) at (1.15, -0.5) {%
        \textcolor{gdmblue}{$\{$\erock$\succ$\escissors$\succ$\epaper$\}$}%
    };
    \draw[thin, black!30] (330:1.02) -- (ballot3.west);
    \node[text=gdmred] at (0,-0.6) {Opp. Strategy};
    \node[text=black, text width=2cm] at (3.5,0.45) {\textcolor{gdmblue}{Our vote} when \textcolor{gdmred}{opp. plays} \erock};
    \end{tikzpicture}
    \caption{An NE is a strategy profile where each player best responds to its co-players. In a game without payoffs, but where players can rank their actions, we use social choice (voting) theory to define a best response. Consider playing an \textcolor{gdmred}{opponent} in rock-paper-scissors (\erock,\epaper,\escissors); their mixed strategy is \textcolor{gdmred}{$[25\%, 30\%, 45\%]$}. When the opponent plays, e.g., \erock, \textcolor{gdmblue}{our} rank vote over \textcolor{gdmblue}{our} own actions is \textcolor{gdmblue}{$\{$\epaper$\succ$\erock$\succ$\escissors$\}$}. Imagine a population of votes with representation proportional to the opponent's mixed strategy\textemdash \textcolor{gdmred}{$25\%$} of the votes are \textcolor{gdmblue}{$\{$\epaper$\succ$\erock$\succ$\escissors$\}$}, \textcolor{gdmred}{$30\%$} are \textcolor{gdmblue}{$\{$\escissors$\succ$\epaper$\succ$\erock$\}$}, etc. We define a best response as the outcome of a voting rule on this population. For example, \emph{Borda} elects \erock{} as \textcolor{gdmblue}{our} best response.}
    \label{fig:cover}
\end{figure}

Instead of assuming access to numerical utilities, one can more generally assume each player is capable of ranking the possible outcomes~\citep{cruz2000ordinal}. Given the other players play a deterministic strategy, a rational player would simply deviate to the action that results in the outcome they most prefer. Any utility function one \emph{might} elicit through careful measurement would also achieve its maximum for that action, and so we can still consider the player to be utility-maximizing despite the lack of utilities.
Unfortunately, it is unclear how to translate this notion in the case where co-players' strategies are mixed ({\it i.e.,} randomized). Previous work has either switched from using probability theory to \emph{possibility} theory~\citep{amor2017equilibria} or introduced an additional, disinterested player to recover a mediated equilibrium~\citep{conitzer2024complexity}. Neither is able to recover
a Nash equilibrium under a traditional probabilistic framework.

A critical obstacle towards defining a Nash equilibrium in this setting is how to aggregate ordinal outcomes under mixed (probabilistic) strategies. In this work, we look towards a field that has spent centuries studying the problem of preference aggregation, a veritable ``mathematics without numbers''~\citep{kemeny1959mathwonums}, namely \emph{social choice theory}. With this viewpoint, we successfully construct a notion of a mixed Nash equilibrium called a \emph{context-ordinal} Nash equilibrium that generalizes the classical definition. %
An example of a context-ordinal equilibrium is depicted and explained in Figure~\ref{fig:cover}.

Given this new definition, many questions emerge. We establish its existence under mild conditions on aggregation (voting) rules, introduce notions of regularization, approximation, and regret, study complexity for simple settings, and develop learning rules for computing such equilibria. In doing so, we provide a generalization of Nash equilibrium that can be directly applied to elicited preferences, the \emph{fundamental data of human interactions}, which we demonstrate in two experiments: (i) general agent evaluation in Arcade, and (ii) empirical analysis of ranked-choice human leader selection (\textit{Lost at Sea}, \citep{born2022, qian2025maskmirror}).

\section{Background \& Related Work}\label{sec:related_work}

First, we review background on classical non-cooperative game theory, social choice, and prior models of equilibria assuming access to only player preferences.

\subsection{Non-Cooperative Game Theory}

A classical normal-form game (NFG) is a tuple $\langle \mathcal{N}, \mathcal{A} = (\bigtimes_{i=1}^n \mathcal{A}_i), u=(u_1, \ldots, u_n)\rangle$ where $\mathcal{N} = \{1, \ldots, n\}$ is the set of players, $\mathcal{A}_i$ is player $i$'s finite set of actions, and $u_i: \mathcal{A} \rightarrow \mathbb{R}$ is player $i$'s utility function. Players may randomize over their action sets, that is, play  \emph{mixed} strategies: $\mathcal{X}_i = \Delta^{\mathcal{A}_i}$. Their utility functions naturally extend to this domain using expected value: $u_i: \mathcal{X} = (\bigtimes_{j=1}^n \mathcal{X}_j) \rightarrow \mathbb{R} = \mathbb{E}_{\va \sim \vx}[u_i(\va)]$ where $\vx \in \mathcal{X}$ and $\va \in \mathcal{A}$. Let $x_{-i}$ denote the mixed strategy profile for players not $i$.

A Nash equilibrium (NE) is a profile $\vx$ from which no player has any incentive to deviate: $u_i(\vx) \ge u_i(z, x_{-i}) \,\, \forall \,\, i, z \in \mathcal{X}_i$.
Equivalently, each player's strategy is a best response:
\begin{align}
    x_i \in \br_i(x_{-i}) = \argmax_{z \in \mathcal{X}_i} u_i(z, x_{-i}) \,\, \forall \,\, i. \label{eqn:br}
\end{align}

\subsection{Social Choice (Voting) Theory}

Much of social choice theory studies procedures for aggregating voter preferences over alternatives such that desirable axioms are satisfied~\cite{brandt2016handbook}. The syntax $a \succ a'$ indicates a voter strictly prefers $a$ to $a'$; $a \succeq a'$, weakly prefers. A voter is indifferent between the two if $a' \succeq a$ and $a \succeq a'$, abbreviated $a \sim a'$. Each voter specifies all their preferences with a preference relation $\rho$. The set of all possible preference relations over a set of alternatives $\mathcal{A}$ is $\mathcal{P}(\mathcal{A})$.

A voting rule that determines the ``winner(s)'' (a non-empty subset, possibly with ties), is a social choice function (SCF). One that returns a ranking over the alternatives is a social welfare function (SWF). An SWF can be converted to an SCF by selecting the subset that achieves the top-rank. A probabilistic SCF (pSCF) returns a distribution (\emph{lottery}) over alternatives. An SCF can be converted to a pSCF by converting its output to a lottery with probability mass only on the winners.
In addition, we assume a voting rule can take as input a lottery over possible votes, rather than the actual set of voters' votes. For example, if the set of votes is $( \{a \succ a'\}, \{a \succ a'\}, \{a \prec a'\} )$ for three voters and two alternatives $a$ and $a'$, then a voting rule can also accept the lottery $[2/3: \{a \succ a'\}, 1/3: \{a' \succ a\} ]$. We call these \emph{doubly} probabilistic SCFs (Def.~\ref{def:dpscf}).
For intuition, we sometimes explain this concept using an infinite population of votes, each distinct vote (ballot) occurring with a given frequency. For example, Fig.~\ref{fig:cover} shows ballot $\{$\escissors$\succ$\epaper$\succ$\erock$\}$ occurring in $30\%$ of votes.

\subsection{Ordinal Games and Equilibria}

\emph{Ordinal games}~\citep{cruz2000ordinal} forgo numerical payoffs and instead assume each player can rank all joint outcomes, e.g., player $1$ of $2$ would rank $(R,S)\sim(S,P)\sim(P,R)\succ(R,R)\sim(P,P)\sim...$ in rock-paper-scissors. \citet{cruz2000ordinal} proposed a notion of approximate equilibrium for $2$-player ordinal games parameterized by \emph{order} $\{m,n\}$, indicating that player $1$ ($2$) seeks their $m$th ($n$th) ranked action when deviating. A traditional pure strategy NE in an ordinal game is equivalent to a $\{1,1\}$-NE; such an NE is not guaranteed to exist though.

\citet{conitzer2024complexity} raises the issue of defining mixed NE in ordinal games and conjectures that it ``cannot be done without access to cardinal utilities''. Instead, \citet{conitzer2024complexity} leverages the folk theorems in infinitely repeated games to construct an equilibrium that is consistent with both repeated games and mediated games. This mediated equilibrium is specified with a joint distribution $\vx$ and correlated co-player distribution $x_{-i}$ for every player $i$. These equilibria are proven to be robust in the sense that for any utility function that satisfies the ordinal constraints of the game, the pair $\vx$ and $(x_{-i})_i$ remains a mediated equilibrium.

Other work defines mixed strategies in terms of \emph{possibility} distributions rather than probability distributions~\citep{amor2017equilibria}. A possibilistic mixed strategy maps each action to an ordinal scale that can be interpreted as a preference or likelihood toward playing that action. A corresponding mixed NE can then be defined in terms of mixed possibilistic strategies.

In contrast to the mediated equilibrium, we aim to define a single factorized equilibrium profile $\vx = (x_1, \ldots, x_n)$. And in contrast to the possibilistic framework, we will define mixed strategies traditionally as probability distributions. In addition, it is unclear how any of the frameworks above might handle noisy ordinal preferences, a practically important scenario we explore later in a stochastic Condorcet election domain.

Lastly, both frameworks assume the ordinal game (OG) setting~\citep{cruz2000ordinal}, which we argue 1) is over-specified for the purposes of defining a suitable NE concept and 2) is unnatural when a player's actions are incomparable under different co-player action profiles (see Sec~\ref{sec:exp:atari}).

\section{Context-Ordinal Games \& Equilibria}\label{sec:veq}

To ascertain if a player would want to alter their strategy from a purported equilibrium, we only need to look at their possible choices, assuming their co-players' strategies remain fixed. In particular, it is \textbf{not} essential, as in an OG, to know how a player $i$ might want their co-players to change their strategies to benefit them ($i$). This leads us to the idea of a \emph{context-ordinal game} (COG), where each player ranks possible outcomes given actions chosen by everyone else. We encourage the reader to consult Fig.~\ref{fig:cover} before continuing.

\begin{definition}[Context-Ordinal Game]
A \emph{context-ordinal game} (COG) is a tuple $\langle \mathcal{N}, \mathcal{A} = (\bigtimes_{i=1}^n \mathcal{A}_i), \rho \rangle$  where $\mathcal{N} = \{1, \ldots, n\}$ is the set of players, $\mathcal{A}_i$ is player $i$'s finite set of actions, and $\pref{} = \{\pref_1, \ldots, \pref_n\}$ contains each player's conditional preference relation. Each $\pref_i: \bigtimes_{j \ne i}^n \mathcal{A}_j \rightarrow \mathcal{P}(\mathcal{A}_i)$ maps the co-players' joint action to player $i$'s preferences over $\mathcal{A}_i$.
\end{definition}

For a strategy profile to be a Nash equilibrium, each player's strategy must be a best response to the remaining players. We generalize the $\argmax$ in~\peqref{eqn:br} to mean any distribution over the player's actions that ``tie'' for top-ranked according to a social choice rule (a doubly pSCF).

\begin{definition}[Doubly Probabilistic SCF]\label{def:dpscf}
A doubly pSCF (dpSCF) is a correspondence $\nu: \Delta^{\mathcal{P}(\mathcal{A}_i)} \rightarrow 2^{\Delta^{\mathcal{A}_i}}$ from a lottery over votes to a convex set of distributions over actions.%
\end{definition}

The next definition is the key to defining our mixed Nash equilibrium concept.

\begin{definition}[Best Response with Social Choice]\label{def:br}
Player $i$'s co-players play $a_{-i}$ with probability $x_{-i}(a_{-i})$. For each $a_{-i}$ played, player $i$ conditionally specifies preferences $\rho_i(a_{-i})$ over their actions $\mathcal{A}_i$, referred to as a ``vote''\footnote{If preferences are stochastic, the vote itself may be represented as a lottery over votes.}. Let $v_i(x_{-i})$ be the resulting population of votes where each vote $\rho_i(a_{-i})$, occurs with probability $x_{-i}(a_{-i})$. Player $i$'s best response, $\br_i(x_{-i})$, is the result of the dpSCF voting rule $\nu_i$ on this population of votes.%
\end{definition}

The following NE definition is standard. Our primary innovation is Def.~\ref{def:br} of a best response.
\begin{definition}[Context-Ordinal Nash Equilibrium]
A strategy profile $\vx$ is a context-ordinal Nash equilibrium (CO-NE) iff $x_i \in \br_i(x_{-i})$ for all i.
\end{definition}
In words, $\vx$ is an NE if every player's mixed strategy only places mass on winning candidate actions.
See~\appxref{app:corr} for a definition of a correlated equilibrium.

Our definition is naturally robust to mis-specification of utilities (assuming they exist), a key focus of~\citep{conitzer2024complexity}. The underlying utilities are allowed to change as long as the partial ranking of actions does not. Because our definition only observes the partial ranking, the NE is invariant to these changes in utility.

\subsection{Existence of Mixed Context-Ordinal NE}\label{sec:veq:existence}

The social choice best response operator $\br_i$ for each player $i$ maps from a partial profile $x_{-i}$ to a non-empty, convex subset of the simplex. Denote upper hemicontinuous by u.h.c.

\begin{theorem}[\citeauthor{kakutani1941generalization}~\citeyear{kakutani1941generalization}]\label{thm:exist}
If each $\br_i$ is u.h.c., then an NE exists in mixed strategies.
\end{theorem}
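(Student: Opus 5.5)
The plan is a direct application of Kakutani's fixed-point theorem to the joint best-response correspondence. Define $\br: \mathcal{X} \rightrightarrows \mathcal{X}$ by
\begin{align*}
\br(\vx) = \br_1(x_{-1}) \times \cdots \times \br_n(x_{-n}).
\end{align*}
A fixed point $\vx \in \br(\vx)$ gives $x_i \in \br_i(x_{-i})$ for every $i$, which is exactly a CO-NE by definition. So it suffices to verify Kakutani's hypotheses:
\begin{itemize}
\item the domain is nonempty, compact and convex;
\item $\br$ has nonempty, convex values;
\item $\br$ has a closed graph (equivalently, is u.h.c.\ with closed values).
\end{itemize}

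The domain $\mathcal{X} = \bigtimes_i \Delta^{\mathcal{A}_i}$ is a finite product of simplices, so it is nonempty, compact and convex in a Euclidean space.

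For the values, I will use the standing assumption stated just before the theorem. Each $\br_i(x_{-i}) = \nu_i(v_i(x_{-i}))$ is a non-empty convex subset of $\Delta^{\mathcal{A}_i}$, as required of a dpSCF in Def.~\ref{def:dpscf}. A finite product of nonempty convex sets is nonempty and convex.

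The one structural observation needed is that $x_{-i} \mapsto v_i(x_{-i})$ is continuous. Its image is a lottery over votes with weights
\begin{align*}
\sum_{a_{-i} : \rho_i(a_{-i}) = \rho} \prod_{j \ne i} x_j(a_j),
\end{align*}
which are polynomials in $\vx$. With stochastic preferences these become mixtures of fixed lotteries, which are still polynomial in $\vx$. So any continuity requirement on $\nu_i$ transfers to $\br_i$. In the theorem, however, u.h.c.\ is assumed directly on $\br_i$, so this observation is mainly a remark.

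Next, u.h.c.\ of each $\br_i$ gives u.h.c.\ of the product correspondence. Take an open $U \supseteq \br(\vx)$. If the values are compact, the tube lemma yields a product open set $\prod_i U_i$ with $\br(\vx) \subseteq \prod_i U_i \subseteq U$. Then u.h.c.\ of each factor gives a neighborhood of $\vx$ on which every $\br_i(x'_{-i}) \subseteq U_i$. Projection $\vx \mapsto x_{-i}$ is continuous, so these neighborhoods can be intersected.

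\textbf{Main obstacle: closed values.} Kakutani needs a closed graph, and u.h.c.\ is equivalent to a closed graph only when the values are closed in a compact codomain. Def.~\ref{def:dpscf} only guarantees convex values, not closed ones. I will handle this in one of two ways.
\begin{itemize}
\item Read ``u.h.c.'' in the theorem as including closed values. This is the standard convention in Kakutani-type statements, and every SCF-derived best response, being a convex hull of winners' vertices, is a closed face of the simplex.
\item Otherwise, apply Kakutani to the closure correspondence $\overline{\br}$, which is still u.h.c.\ with convex values. I would note this explicitly as the mild condition the theorem implicitly requires.
\end{itemize}
With closed values, u.h.c.\ gives a closed graph, Kakutani applies, and the fixed point is a CO-NE.
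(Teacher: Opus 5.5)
Your proposal is correct and is essentially the paper's argument: the paper gives no separate proof and simply invokes Kakutani on the product of the nonempty, convex-valued correspondences $\br_i$ over the product of simplices, exactly as you do. One simplification is available. The paper defines u.h.c.\ sequentially, which is exactly the closed-graph property: closed values follow by taking a constant co-player sequence, and the product's graph is closed componentwise. So your closed-values obstacle and tube-lemma step are unnecessary, and your first option is the right resolution. The closure fallback would only yield $x_i \in \overline{\br_i(x_{-i})}$, which is not a CO-NE.
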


A set-valued mapping $\br_i$ is u.h.c. if for every convergent sequence of co-player strategies $\{\textcolor{gdmred}{x_{-i}^t}\}_{t}$, e.g., the distribution over votes illustrated in Fig.~\ref{fig:cover}, and for any convergent sequence of player $i$'s best responses $\{\textcolor{gdmblue}{x_i^t}\}_{t}$ with $\textcolor{gdmblue}{x_i^t} \in \br_i(\textcolor{gdmred}{x_{-i}^t})$, the $\lim_{t \rightarrow \infty} \textcolor{gdmblue}{x_i^t}$ lies in $\br_i(\lim_{t \rightarrow \infty} \textcolor{gdmred}{x_{-i}^t})$.

Traditional SCFs may exhibit discontinuities in their elected candidates as the population of votes slightly changes. By considering the distributions returned by dpSCFs, we can more naturally study their u.h.c. properties.
The u.h.c. definition handles subtleties that arise, for example, when the distribution of votes changes from one that prefers candidate A to one in which candidate A and B are tied. In the latter case, any distribution over A and B is \uline{valid} and samples a suitable winning candidate: $\br_i(\lim_{t \rightarrow \infty} \textcolor{gdmred}{x_{-i}^t}) = [z, 1-z], z \in [0,1]$. And if the limit of the winning candidate distribution specifically selected A ($\lim_{t \rightarrow \infty} \textcolor{gdmblue}{x_i^t} = [1, 0]$), that would still satisfy the u.h.c. condition because A is in the set of \uline{valid} candidate distributions.

We identify several u.h.c. families of voting rules:
a) score voting where each vote assigns candidates numerical scores and the candidate with highest ($x_{-i}$-weighted) score wins, b)
positional voting~\citep{saari1995basic}, e.g., Borda counts, c)
probabilistic voting~\citep{brandl2016consistent}, e.g., maximal lotteries~\citep[p. 30]{fudenberg1991game}, d)
and social grading functions~\citep{balinskilaraki2007measuring}.
Scoring and positional voting rules induce normal-form games. Classical NE implicitly uses score voting (see~\appxref{app:cog:nfg}) where voters score candidate actions with their precise payoffs.
Later we introduce regularized best responses, rendering any dpSCF voting rule u.h.c. and ensuring existence of their NE.

\paragraph{Complexity of CO-NE}
In~\appxref{app:complexity}, we study complexity of CO-NE and show that there exist intuitively adversarial $2$-player games that when studied under simple voting rules (e.g., Borda counts) map to normal-form games that are not zero-sum. The implication is that CO-NE are not polynomial-time computable. Whereas prior work sought to define equilibria of ordinal games that lie in $\textsf{P}$ at the expense of a departure from traditional representation, our aim is to define an equilibrium notion that generalizes Nash: probabilistic, factorizes, and gracefully reduces to classical NE under assumptions, e.g., score voting where score$=$utility.

\section{Learning and Approximation}

Gradient descent serves as the workhorse of learning in games. Its interpretation as a proximal operator is important here because it allows us to view (projected) gradient descent as the solution to a regularized optimization problem. This view appears in related algorithms like follow the regularized leader~\citep{mcmahan2017survey,suggala2020follow} and mirror descent as well~\cite{beck2003mirror}:
\begin{align}
x_i' = \Pi_{\mathcal{X}_i}[x_i + \eta \nabla_{x_i} u_i(x_i, x_{-i})] &= \argmax_{z \in \mathcal{X}_i} u_i(z, x_{-i}) - \frac{1}{2\eta} ||z - x_i||^2
\end{align}
where $x_i'$ denotes the next iterate, $\Pi_{\mathcal{X}_i}$ denotes the Euclidean projection onto the set $\mathcal{X}_{i}$, and $\eta$ is a step size parameter (equiv., inverse regularization coefficient). Convergence of gradient descent is typically analyzed in terms of the successive distance between iterates, $||x_i' - x_i||$. Notice that in unconstrained settings when the projection operator acts as an identity, this simplifies to $||\eta \nabla_{x_i} u_i(x_i, x_{-i})||$ and is hence proportional to the norm of the gradient. Gradient norms are used as both performance metrics and constructing loss functions to develop other algorithms~\citep{gempapproximating}. It should not come as a surprise then that our first task is to replicate a technique to regularize our best response definition. Doing so will provide us with methods for learning as well as metrics to measure performance of those learning algorithms.

\subsection{Regularization of Best Responses}\label{sec:reg}

\newcommand{\coinH}{\tikz[baseline=-0.7ex]{\fill[black] (-0.1,0) circle (0.55em); \node[font=\scriptsize, text=white, inner sep=0pt] at (0,0) {\textsf{H}};}}
\newcommand{\coinT}{\tikz[baseline=-0.7ex]{\fill[white] (0,0) circle (0.55em); \draw[thick, black] (0,0) circle (0.55em); \node[font=\scriptsize, text=black, inner sep=0pt] at (0,0) {\textsf{T}};}}%
\begin{figure}[t]
    \centering
    \begin{tikzpicture}[scale=1.5, >=Stealth]
    \node[font=\footnotesize, anchor=east, inner sep=4pt] (steps) at (-2.0, 0) {%
        \begin{tabular}{@{}r@{\;}l@{}}
            \textcolor{black}{\textbf{1.}}
                & %
                  $\textcolor{maroon}{\hat{x}_{-i}} \sim
                   \mathrm{Dir}\!\big(\mathbf{1}
                   + \tfrac{1}{q} \textcolor{gdmred}{x_{-i}}\big)$
                \\[6pt]
            \textcolor{black}{\textbf{2.}}
                & \erock\textcolor{navyblue}{%
                  $\;\sim \mathrm{Cat}(\mu_i)$}
                \\[6pt]
            \textbf{3.}
                & \coinH$\;\sim \mathrm{Bern}(p)$
        \end{tabular}%
    };
    \begin{scope}[shift={(1.8, 0)}]
        \draw[thick, fill=black!10] (0,0) -- (0:1) arc (0:79.2:1) -- cycle;
        \draw[thick, fill=black!10] (0,0) -- (79.2:1) arc (79.2:190.8:1) -- cycle;
        \draw[thick, fill=black!10] (0,0) -- (190.8:1) arc (190.8:360:1) -- cycle;
        \node[text=maroon] at (25:0.55) {22\% \erock};
        \node[text=maroon] at (135:0.55) {31\% \epaper};
        \node[text=maroon] at (275.4:0.35) {47\% \escissors};
        \node[anchor=west] (ballot1) at (1.15, 0.45) {%
            \textcolor{gdmblue}{$\{$\epaper$\succ$\erock$\succ$\escissors$\}$}%
        };
        \draw[thin, black!30] (25:1.02) -- (ballot1.west);
        \node[text=black!40, anchor=east] (origballot) at (-1.2, 0.1) {%
            \textcolor{gdmblue}{$\{$\escissors$\succ$\epaper$\succ$\erock$\}$}%
        };
        \draw[thick, black!40] (origballot.west) -- (origballot.east);
        \node[text=navyblue, above=2pt of origballot] (usurpballot) {%
            $v_u{=}\{$\erock$\succ$\epaper$\sim$\escissors$\}$%
        };
        \draw[thin, black!30] (135:1.02) -- (origballot.east);
        \node[anchor=west] (ballot3) at (1.15, -0.5) {%
            \textcolor{gdmblue}{$\{$\erock$\succ$\escissors$\succ$\epaper$\}$}%
        };
        \draw[thin, black!30] (330:1.02) -- (ballot3.west);
        \fill[white] (60:0.88) circle (0.08);
        \draw[thick, black] (60:0.88) circle (0.08);
        \node[font=\tiny, text=black] at (60:0.88) {\textsf{T}};
        \fill[black] (170:0.88) circle (0.08);
        \node[font=\tiny, text=white] at (170:0.88) {\textsf{H}};
        \fill[white] (240:0.88) circle (0.08);
        \draw[thick, black] (240:0.88) circle (0.08);
        \node[font=\tiny, text=black] at (240:0.88) {\textsf{T}};
    \end{scope}
    \end{tikzpicture}
    \caption{%
        Algorithm~\ref{alg:reg_br} applied to the example from
        Fig.~\ref{fig:cover}.
        \textbf{Step~1:} Original co-player strategy
        \textcolor{gdmred}{$x_{-i} = [25\%, 30\%, 45\%]$} is perturbed via
        \textcolor{maroon}{$\hat{x}_{-i}$} $\sim
        \mathrm{Dir}(\mathbf{1} + \frac{1}{q}$ \textcolor{gdmred}{$x_{-i}$}$)$,
        yielding \textcolor{maroon}{$[22\%, 31\%, 47\%]$};
        as $q \to 0$, $\textcolor{maroon}{\hat{x}_{-i}} \to \textcolor{gdmred}{x_{-i}}$.
        \textbf{Step~2:} A usurper action
        $u{=}$\erock{} is sampled from
        \textcolor{navyblue}{$\mathrm{Cat}(\mu_i)$}, giving
        \textcolor{navyblue}{$v_u = \{$\erock$\succ$\epaper$\sim$\escissors$\}$}
        with \erock{} top-ranked and all others tied.
        \textbf{Step~3:} Each vote is independently replaced
        by \textcolor{navyblue}{$v_u$} with probability $p$ (coin flip).
        When $p=0$ the population is unchanged; when $p=1$
        all votes become~\textcolor{navyblue}{$v_u$}. A voting rule $\nu_i$ determines a best response from this perturbed population. Results are averaged over trials to obtain the final \emph{regularized} best response.
    }
    \label{fig:perturbation}
\end{figure}

Regularization is a useful tool in game theory for selecting equilibria~\citep{mckelvey1995quantal,harsanyi1988general}, aiding convergence~\citep{perolat2021poincare}, imitating target play~\citep{gempconvex,bakhtinmastering}, and online learning (FTRL)~\citep{shalev2012online}. An exemplar is KL-divergence: $\textsf{KL}(\target \parallel x) = \sum_{\mathcal{A}} \target(a) \log(\target(a) / x(a))$, which measures how much an approximate distribution $x$ differs from a true distribution $\target$. The reverse KL, $\textsf{KL}(x \parallel \target)$ was used in~\citep{bakhtin2022mastering} to regularize learned strategies in Diplomacy towards recorded human play.

Given that COGs lack utilities, making how to achieve direct regularization unclear, we take the approach of regularizing via random perturbation.
Let $BR_i^{(p,q,\mu_i)}$ denote the regularized best response operator parameterized by target distribution $\mu_i \in \Delta^{\mathcal{A}_i}$, perturbation hyperparameter $q \ll 1$, and regularization strength $p \in [0, 1]$. We set $\mu_i = \sfrac{1}{\vert \mathcal{A}_i \vert}\mathbf{1}$ always.

A regularized best-response in COGs should satisfy the following desiderata for every $x_{-i}$:
\begin{enumerate}[label=(\textbf{C\arabic*}), ref=C\arabic*]
    \item \label{des:target} Maximum regularization ($p=1$) results in $\target_i$ for any target distribution $\target_i$;%
    \item \label{des:no_effect} Zero regularization ($p=0$) returns an element of $\br_i$ from Def.~\ref{def:br};
    \item \label{des:unique} Any regularization results in a single-valued best response;
    \item \label{des:cont} The regularized best response is a continuous function of $x_{-i}$.
\end{enumerate}

For clarity, we use a concrete example in Fig.~\ref{fig:perturbation} to describe our approach which satisfies the above conditions and defer Algorithm~\ref{alg:reg_br} and a more rigorous discussion to~\appxref{app:reg_br}.

Next, we will leverage regularization to construct notions of approximation and regret in COGs. Later in Section~\ref{sec:exp}, we also use it to construct algorithms, e.g., FTRL, to approximate context-ordinal Nash equilibria in experiments. \appxref{sec:alg} reviews learning algorithms.

\subsection{Performance Metrics}\label{sec:approx}

Approximate Nash equilibria are most often judged on how much any player can gain by deviating, referred to as \emph{exploitability}, $\epsilon = \max_i \epsilon_i$, where:
\begin{align}
    \epsilon_i(\vx) &= \max_{z \in \Delta^{\mathcal{A}_i}} u_i(z, x_{-i}) - u_i(\vx), \label{eqn:classical_eps_i}
\end{align}
and $\epsilon_i$ is sometimes referred to as \emph{immediate} regret. That is because of the tight connection between game theory and online learning~\citep{gordon2008no}. (External) regret for a sequence of strategies $[x_{i,t}]_{t \in [1, T]}$ measures exploitability over $T$ rounds of play: $\max_{z \in \Delta^{\mathcal{A}_i}} \sum_{t=1}^T u_{i}(z, x_{-i, t}) - u_{i}(\vx_t)$.
COGs do not provide utilities so we explore alternative notions of approximation and regret.

\subsubsection{Strategy Space}\label{sec:approx:strat}

We can measure a distance to NE in strategy space as $\epsilon = \max_i \epsilon_i$,
\begin{align}
    \epsilon_i(x) &= \min_{z \in \br_i(x_{-i})} D(z, x_i) \label{eqn:exp_strat_noreg}
\end{align}
and $D$ is continuous and non-negative. For example, $D$ could be earth mover's distance which results in $\epsilon_i$ summing the amount of probability mass that player $i$ has placed on strictly losing candidates.

As defined, this function may be discontinuous because of jumps in the u.h.c. best response set $\br_i$. We can replace the feasible set by the \emph{regularized} best response $\br_i^{(p,q,\target_i)}$ which is continuous. By Berge's maximum theorem~\citep[Theorem 17.31]{aliprantis2006},
\begin{align}
    \epsilon_i^{(p,q,\target_i)}(\vx) &= \min_{z_i \in \br_i^{(p,q,\target_i)}(x_{-i})} D(z_i, x_i) \label{eqn:emd}
\end{align}
is continuous in $\vx$.
Figure~\ref{fig:nwon_chicken} in~\appxref{app:approx} illustrates this measure for the classic Chicken game. While this measure is suitable for immediate regret, it is unclear how to extend it to regret which evaluates a sequence of strategies.

\begin{figure*}
    \centering
    \begin{subfigure}[t]{0.48\textwidth}
        \centering %
        \caption{Exploitability} %
        \label{fig:eps} %
        \includegraphics[width=1.0\linewidth]{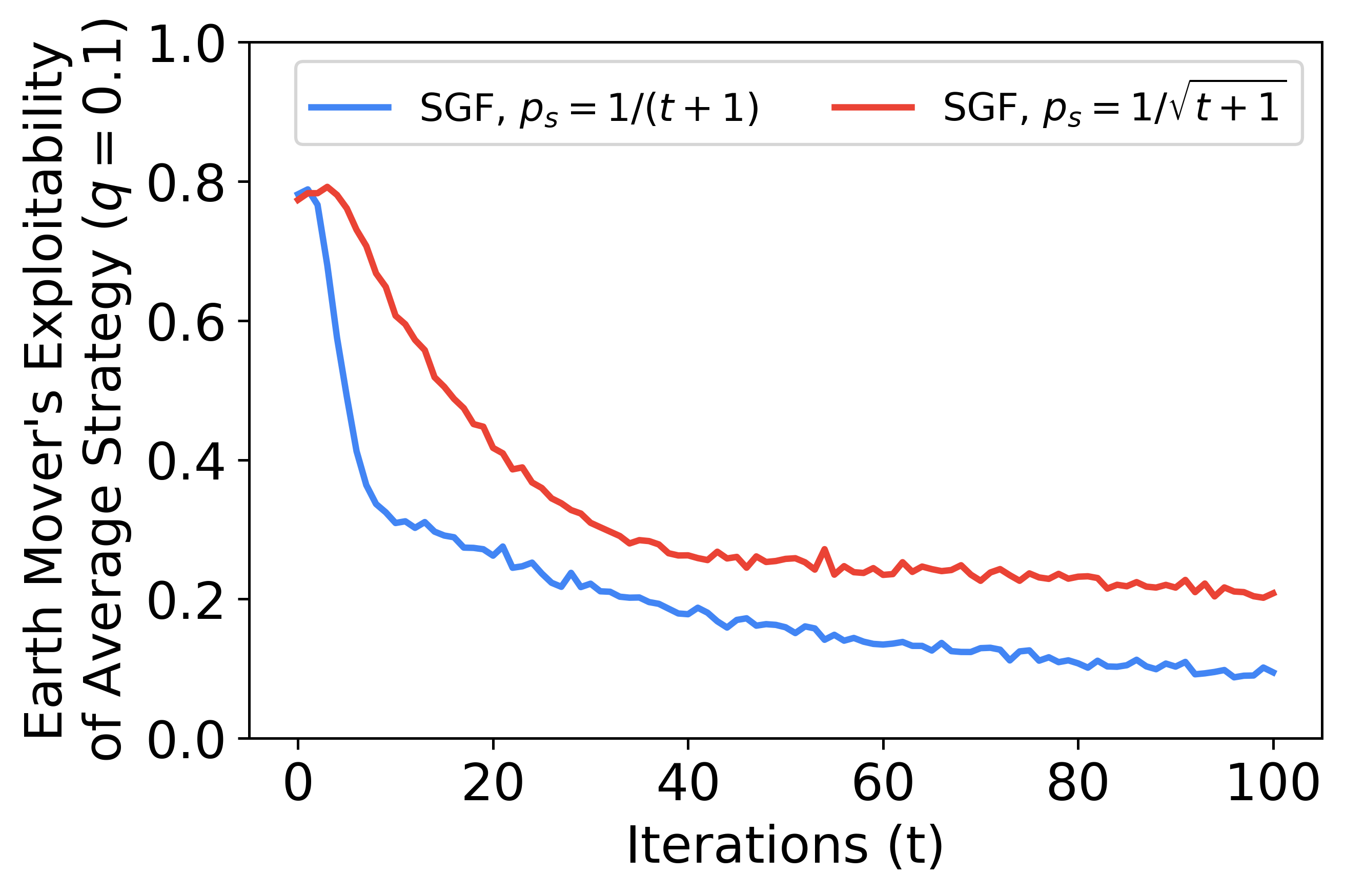}
    \end{subfigure}
    \hfill
    \begin{subfigure}[t]{0.48\textwidth}
        \centering %
        \caption{Hindsight Winrate} %
        \label{fig:regret} %
        \includegraphics[width=1.0\linewidth]{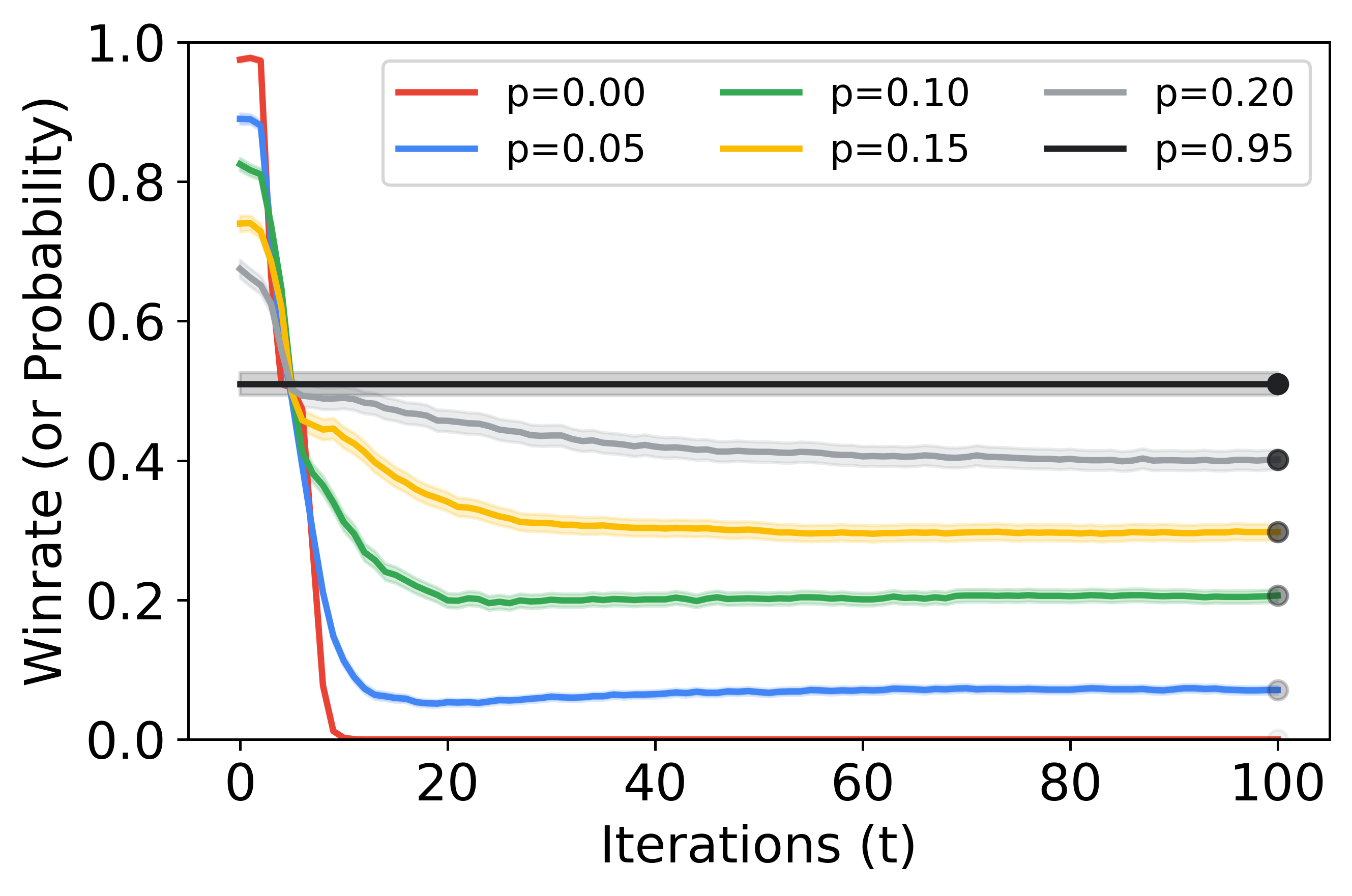}
    \end{subfigure}
    \caption{We evaluate our (SGF-based) FTRL approach ($p_s$ indicates the \emph{solver}'s $\br_i^{(p_s,q,\mu_i)}$ parameter) on the Atari evaluation game according to the metrics in Section~\ref{sec:approx}. (\subref{fig:eps}) EMD as defined in Section~\ref{sec:approx:strat} with $p=0$, eqn.~\peqref{eqn:emd}; (\subref{fig:regret}) Hindsight winrate from Section~\ref{sec:approx:winrate}. Panel~(\subref{fig:regret}) sets $p_s=\sfrac{1}{t+1}$. FTRL can also be considered a \emph{smoothed}-FP approach~\citep{fudenberg1995consistency}.}
    \label{fig:atari_regret}
\end{figure*}

\subsubsection{Meta-Game Analysis}\label{sec:approx:winrate}

Another choice directly asks ``would you prefer to have played a fixed $z_i^*$ in hindsight?'' To define the best fixed $z_i^*$ in hindsight, simply collect all the ``votes'' generated at each round $t$ by the co-players in proportion to their strategies $x_{-i,t}$. Use the (dpSCF) voting rule to aggregate the votes across all rounds and return the best probabilistic strategy $z_i^*$.

To measure the regret, consider the same sequence of co-player strategies as before, but apply the voting rule, only considering the hindsight and online actions in proportion to their appearance in their mixed strategies. For example, if at $t=t'$, $z_i^* = [1, 0]$ and $x_{i,t'} = [0.5, 0.5]$, in proportion to each $a_{-i} \sim x_{-i}$ played at time $t'$, generate two votes with equal representation: the first action compared against itself and the first action against the second. As before, collect all of the generated votes and apply the voting rule (equiv. best response operator $\br_i$) to determine the winner.
We present the probability of hindsight being selected by $\br_i^{(p,q=0.1,\mu_i=[\sfrac{1}{2},\sfrac{1}{2}])}$ in Figure~\ref{fig:regret} over iterations.

In~\appxref{sec:approx:mov}, we explore one more additional metric derived from the social choice perspective, namely \emph{margin of victory}, which measures the proportion of votes ($x_{-i}$) that must be altered for a given strategy $x_i$ to become a best response $x_i \in \br_i(x_{-i})$.

\subsection{Approximating Classical Nash Equilibria}

The focus of this work is on games with only revealed ordinal preferences. Nevertheless, one might still hope to approximate a classical NE of the game that is defined by \emph{hidden} cardinal utilities. We obtain non-trivial approximation bounds for that setting by appealing to results from the study of \emph{distortion} within social choice. Let $u_i \rhd \rho_i$ mean any utility $u_i$ consistent with preferences $\rho_i$. Then the (additive $+$) distortion of a voting rule $\nu_i$,
\begin{align}
    d^+(\nu_i) &= \max_{\rho_i, \vx} \max_{u_i \rhd \rho_i} [ u_i(\vx) - u_i(\br_i, x_{-i})], \label{eqn:add_distortion}
\end{align}
captures the cost of electing candidates using ordinal information instead of voters' more nuanced cardinal utilities (ordinal information is assumed consistent with utilities).

It is typically assumed that each voter attributes cardinal utilities to candidates such that they are non-negative and sum-to-$1$, ``one person, one vote''~\citep{procaccia2006distortion}. This property does not generally hold in games\textemdash a player's payoffs do not sum-to-$1$ under each possible co-player action $a_{-i}$. Nevertheless, equilibria are invariant to affine transformations of payoffs, so we can shift and scale each player's payoffs such that they are all non-negative with positive sum. This still leaves varying payoff-sums ($\vs$), but we can recover a bound on the suboptimality of the best responses computed using voting rules versus expected payoff maximization.
\begin{theorem}\label{lemma:scaled_distortion}
Let $\nu_i$ be a dpSCF voting rule and $\br_i$ its induced best response (Def.~\ref{def:br}). Let $d^+(\nu_i, u_i, x_{-i}) = \max_{z} u_i(z, x_{-i}) - u_i(\br_i, x_{-i}) \ge 0$ denote the suboptimality of the best response to $x_{-i}$ computed using $\nu_i$. Finally, let $\vs(u_i) = \{\sum_{a_i \in \mathcal{A}_i} u_i(a_i, a_{-i})\}_{a_{-i} \in \mathcal{A}_{-i}}$ denote the set of player $i$'s payoff sums under each co-player action profile. Then
\begin{align}
d^+(\nu_i, u_i, x_{-i}) &\le \bar{d}^+(\nu_i, \vs(u_i)) = \kappa^+ + (\min_k s_k) d^+(\nu_i)
\end{align}
where $\kappa^+ = (\max_k s_k - \min_k s_k)$ and $d^+(\nu_i)$ is the \emph{additive} distortion $\nu_i$ from~\peqref{eqn:add_distortion}.
\end{theorem}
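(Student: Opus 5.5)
The plan is to reduce to the normalized ("one person, one vote") setting in which the additive distortion $d^+(\nu_i)$ of \peqref{eqn:add_distortion} is defined, and to charge the cost of unequal payoff sums to the term $\kappa^+$. As in the surrounding discussion, I assume that $u_i$ has already been affinely shifted and scaled so that all payoffs are non-negative and every sum $s_k>0$, where $k$ indexes co-player profiles $a_{-i}$.

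First, for each co-player profile $a_{-i}$ with payoff sum $s_{a_{-i}}$, I define the normalized utility
\[
w(a_i,a_{-i}) = \frac{u_i(a_i,a_{-i})}{s_{a_{-i}}}.
\]
This $w$ is non-negative, sums to $1$ over $\mathcal{A}_i$, and is consistent with $\rho_i(a_{-i})$, since dividing by a positive constant preserves the ranking. Hence $w \rhd \rho_i$ is an admissible utility in the definition of $d^+(\nu_i)$. Because $\br_i$ depends only on the preference profile and $x_{-i}$, it is the same set whether we reason with $u_i$ or with $w$.

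Second, I would split the utility additively. With $s_{\min}=\min_k s_k$,
\[
u_i(a_i,a_{-i}) = s_{\min}\, w(a_i,a_{-i}) + (s_{a_{-i}}-s_{\min})\, w(a_i,a_{-i}).
\]
Call the second summand $r(a_i,a_{-i})$. Since $w\in[0,1]$, we have $0\le r(a_i,a_{-i})\le s_{a_{-i}}-s_{\min}\le \kappa^+$. Both pieces extend linearly to mixed strategies. Now fix any $z$ and any element $y\in\br_i(x_{-i})$. Then
\[
u_i(z,x_{-i})-u_i(y,x_{-i}) = s_{\min}\big[w(z,x_{-i})-w(y,x_{-i})\big] + \big[r(z,x_{-i})-r(y,x_{-i})\big].
\]
The first bracket is at most $d^+(\nu_i)$, by the definition of additive distortion applied to the normalized utility $w$. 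The second bracket is at most $\kappa^+$, because each expectation of $r$ lies in $[0,\kappa^+]$. Taking the maximum over $z$ then gives the claimed bound $\kappa^+ + s_{\min}\, d^+(\nu_i)$.

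The main obstacle is being careful about what ``$u_i(\br_i,x_{-i})$'' means when $\br_i$ is set-valued. I will interpret it consistently with \peqref{eqn:add_distortion}, so that the bound holds for whichever element (or the worst element) is used; the argument above works pointwise for every $y\in\br_i(x_{-i})$, so this causes no problem. Nonnegativity of $d^+(\nu_i,u_i,x_{-i})$ is immediate, since $z$ may be chosen equal to the selected best response. A further minor point is that $d^+(\nu_i)$ takes a maximum over all $\rho_i$, $\vx$ and all normalized $u_i \rhd \rho_i$, so the specific $w$ and $x_{-i}$ used here are covered by that maximum.
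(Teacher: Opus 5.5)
Your proof is correct and takes essentially the same route as the paper. Both arguments normalize each context's payoffs by its sum $s_k$ to reach the sum-to-one setting where $d^+(\nu_i)$ applies, extract $(\min_k s_k)\,d^+(\nu_i)$, and absorb the remainder into $\kappa^+$ using that normalized welfare is at most $1$. The only cosmetic difference is that you use the exact split $u_i = s_{\min} w + r$ with $0 \le r \le \kappa^+$, whereas the paper sandwiches $s_{\min} SW \le SW^s \le s_{\max} SW$ and then rearranges $s_{\max} SW(\vx_c) - s_{\min} SW(\vx_\nu)$.
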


Theorem~\ref{thm:reg_distortion} shows our regularized best response $\br_i^{(p,q,\mu_i)}$ only introduces an additional distortion to $d^+(\nu_i)$ that is at most linear in $p$ for small $p$.
We similarly derive a multiplicative bound $d(\nu_i, u_i, x_{-i}) = \max_{z} \frac{u_i(z, x_{-i})}{u_i(\br_i, x_{-i})} \le \kappa d(\nu_i)$ where $\kappa = \frac{\max_k s_k}{\min_k s_k}$.
There exist voting rules $\nu_i$ that guarantee additive distortion of $\sfrac{1}{4} \le d^+(\nu_i) \le \sfrac{1}{2} (1 - \sfrac{1}{\vert \mathcal{A}_i \vert^2})$ and achieve $\tilde{\mathcal{O}}(\sqrt{\vert \mathcal{A}_i \vert})$ for multiplicative distortion $d(\nu_i)$~\citep{caragiannis2017subset}.

One of the earliest decentralized learning algorithms, (weakened) fictitious-play (WFP), iterates by approximately best responding to the co-player's historical play. The upper bound on the approximation error just derived in Theorem~\ref{lemma:scaled_distortion} can be used to directly derive bounds on the approximation error of an equilibrium computed using WFP.

\begin{corollary}[Theorem 1~\citep{conitzer2009approximation}]\label{cor:cofp}
The WFP profile $\vx_T = [x^{(1)}_T, x^{(2)}_T]$ after $T$ rounds is an $\epsilon_T$-NE where $\epsilon_T \le \frac{T+1}{2T} + \frac{T-1}{2T} \max_i \bar{d}^+(\nu, \vs(u_i))$.
\end{corollary}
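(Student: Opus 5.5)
The plan is to reduce the claim to the cited result of \citet{conitzer2009approximation} and use Theorem~\ref{lemma:scaled_distortion} to supply the per-round approximation quality. Conitzer's Theorem~1 considers a two-player game with payoffs normalized to $[0,1]$. In each round $t \ge 2$, each player plays an action that is an $\delta$-best response to the empirical distribution of the co-player's first $t-1$ actions. The theorem then states that the empirical profile $\vx_T$ after $T$ rounds is an $\big(\frac{T+1}{2T} + \frac{T-1}{2T}\delta\big)$-NE. So it suffices to show that the WFP iterates driven by the voting rule $\nu$ are $\delta$-best responses in this sense, with $\delta = \max_i \bar{d}^+(\nu, \vs(u_i))$.

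\textbf{Step 1 (normalization).} Using invariance of (approximate) equilibria under positive affine payoff transformations, fix hidden utilities $u_i \rhd \rho_i$ shifted and scaled into $[0,1]$. The same transformation is applied consistently to the quantity $\vs(u_i)$ entering $\bar{d}^+$. This does not change $\rho_i$, so it does not change the dpSCF output or the WFP trajectory.

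\textbf{Step 2 (per-round guarantee).} At round $t$, player $i$'s WFP action is drawn from $\br_i(\hat{x}_{-i,t-1})$, where $\hat{x}_{-i,t-1}$ is the co-player's empirical history. Theorem~\ref{lemma:scaled_distortion} applies to this co-player strategy and gives
\[
\max_z u_i(z,\hat{x}_{-i,t-1}) - u_i(\br_i,\hat{x}_{-i,t-1}) \le \bar{d}^+(\nu,\vs(u_i)).
\]
Hence each step is a $\delta$-best response with $\delta = \max_i \bar{d}^+(\nu,\vs(u_i))$. One subtlety needs checking: the theorem bounds the expected payoff of the (possibly mixed) best-response output, while Conitzer's argument uses the realized action. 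I would either let WFP record the mixed best response itself in the history (which linearity makes harmless), or note that any action in the support of a mixed best response is used only through its expected payoff.

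\textbf{Step 3 (invoking Conitzer's accounting).} Conitzer's argument proceeds as follows. Write $u_1(\vx_T)$ as an average over all $T^2$ ordered pairs of rounds $(t,s)$, consisting of player $1$'s action at $t$ and player $2$'s action at $s$.
\begin{itemize}[leftmargin=*]
\item The roughly $T(T+1)/2$ pairs with $s \ge t$ can each lose at most $1$, using only that payoffs lie in $[0,1]$.
\item The $T(T-1)/2$ pairs with $s < t$ are controlled by the best-response property of the action at round $t$ against history $\{a_s\}_{s<t}$.
\end{itemize}
The per-round slack $\delta$ enters additively on exactly these lower-triangular pairs, which yields the $\frac{T-1}{2T}\delta$ term. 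Since the original proof only uses the inequality ``action at $t$ is within the slack of optimal against history'', substituting $\delta$ for exact optimality goes through verbatim. Taking the max over both players then gives the stated $\epsilon_T$.

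The main obstacle is Step~1: making sure the normalization that Conitzer requires (payoffs in $[0,1]$) is the same one under which $\bar{d}^+$ is computed in Theorem~\ref{lemma:scaled_distortion}. The payoff sums $s_k$, and hence $\kappa^+$ and $\min_k s_k$, depend on the chosen affine representative. I would first try fixing the representative that maps the minimum payoff to $0$ and the maximum to $1$, and check that the bound of Theorem~\ref{lemma:scaled_distortion} is stated for arbitrary nonnegative representatives, so it applies directly to that choice.
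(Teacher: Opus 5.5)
Your proposal is correct and follows the paper's own route. The paper also re-traces Conitzer's Theorem~1 with each round's play being a $d^+(\nu,\vs)$-approximate best response to the co-player's history, supplied by Theorem~\ref{lemma:scaled_distortion}: the slack enters only on the $T(T-1)/2$ pairs where the co-player's round precedes the player's round, and the other $T(T+1)/2$ pairs are bounded using payoffs in $[0,1]$. One small correction: Conitzer's Theorem~1 is stated only for exact best responses, so the $\delta$-slack version cannot be cited as a black box; it is exactly what your Step~3, like the paper, re-derives.
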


Shift and scale the payoffs w.l.o.g. so that $\max_k s_k = 1$. Theorem~\ref{lemma:scaled_distortion} combined with Corollary~\ref{cor:cofp} and the known distortion bounds listed above imply there exists a voting rule $\nu$ such that FP converges to, at worst, a $(1 - \sfrac{1}{4} \min_k s_k)$-NE in $2$-player general-sum. Note that $\sfrac{1}{2}$ is the known lower bound for approximating NE with constant size support~\citep{feder2007approximating}. With only access to ordinal feedback, FP achieves $\sfrac{3}{4}$ for $\min_k s_k = 1$, a loss of $\sfrac{1}{4}$ in expected payoff.

Recent work provides algorithms for approximating coarse correlated equilibria assuming player's rankings over actions are drawn according to a Placket-Luce model consistent with true underlying utilities~\citep{liu2026online}. This allows them to estimate the utilities by observing sampled rankings, i.e., invert the model. In contrast, we do not assume it is possible to discover the underlying utilities. For example, if a user deterministically reveals their preferences to be a single ranking, we can never learn the gaps in utility between actions. Our aim in this work is to develop a theory that natively handles ordinal information without (implicitly) working with any presumed underlying cardinal information.

\begin{figure*}
    \centering
    \begin{subfigure}[t]{0.32\textwidth}
        \centering %
        \caption{Agent Marginals} %
        \label{fig:agent_marginals} %
        \includegraphics[width=1.0\linewidth]{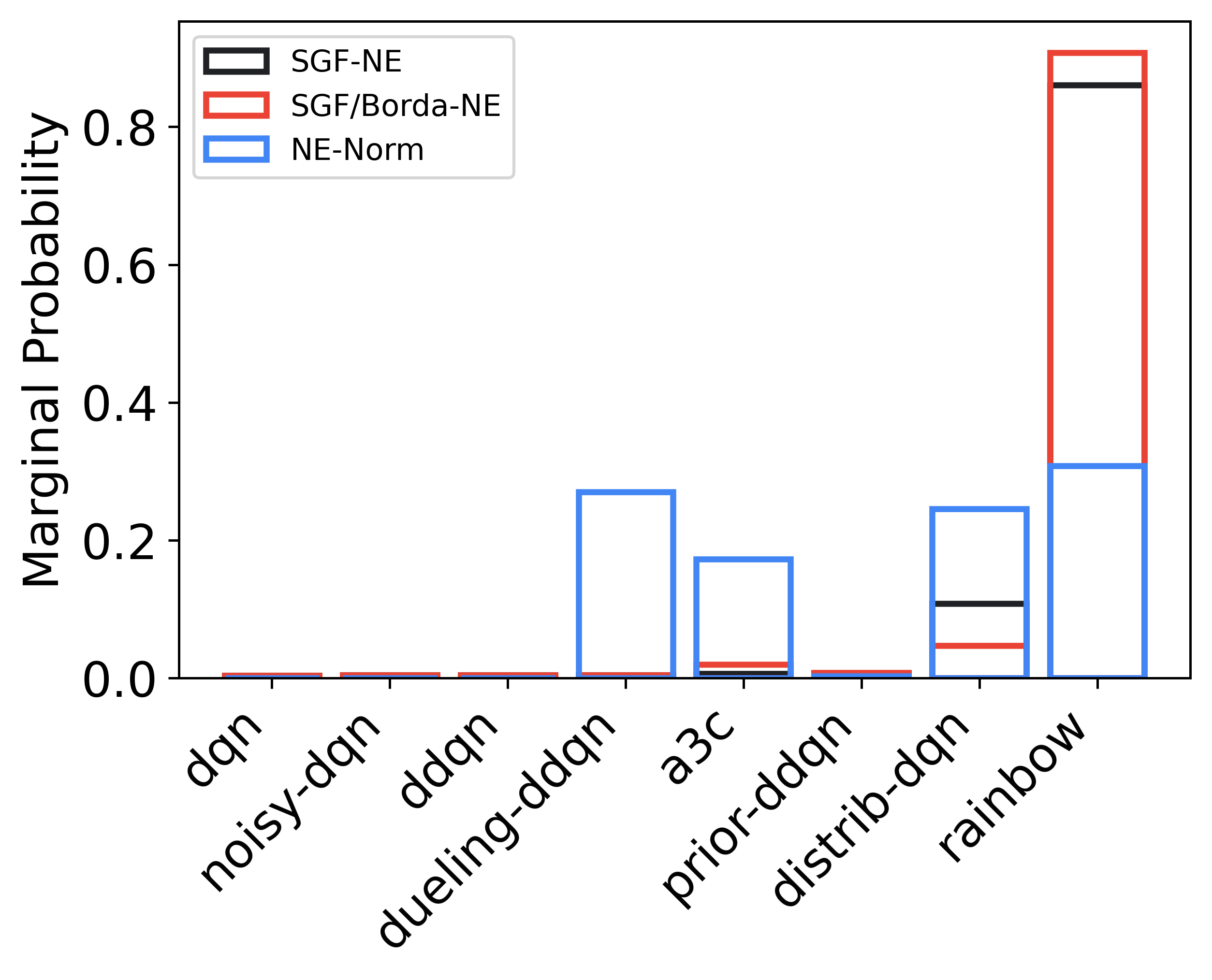}
    \end{subfigure}
    \hfill
    \begin{subfigure}[t]{0.32\textwidth}
        \centering %
        \caption{Game Marginals} %
        \label{fig:game_marginals} %
        \includegraphics[width=1.0\linewidth]{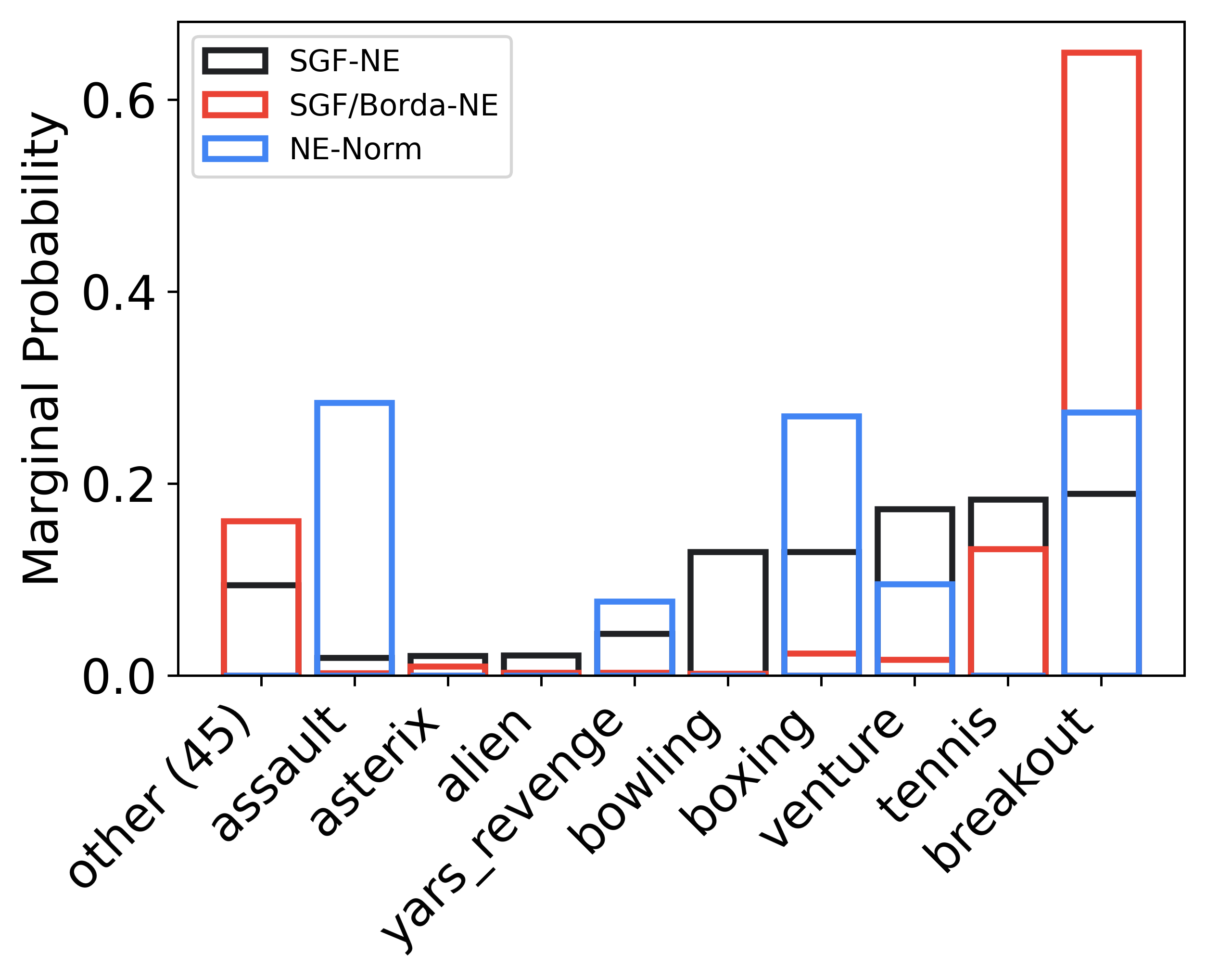}
    \end{subfigure}
    \hfill
    \begin{subfigure}[t]{0.32\textwidth}
        \centering %
        \caption{Agent Marginals Diffs} %
        \label{fig:agent_marginals_conv} %
        \includegraphics[width=1.0\linewidth]{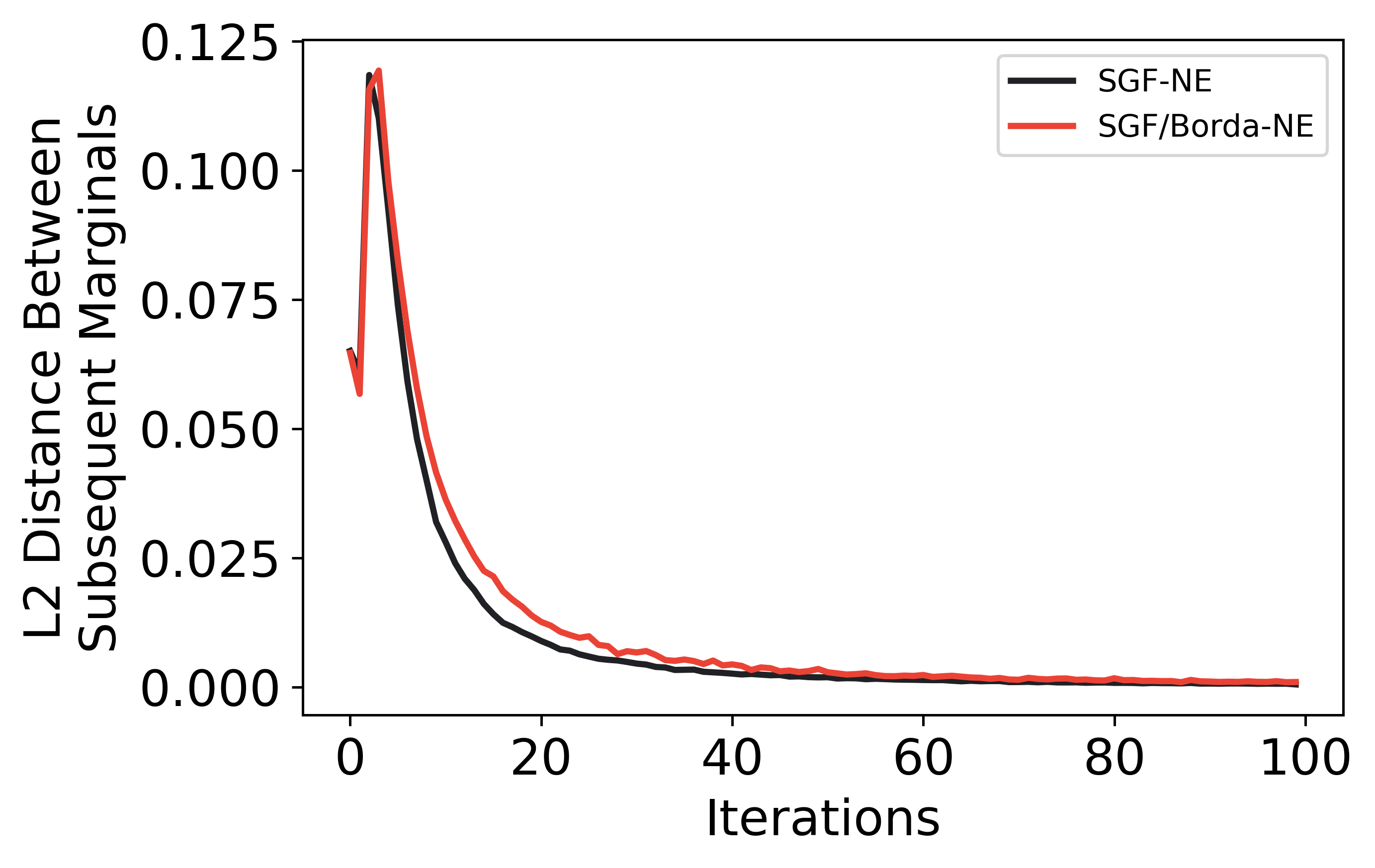}
    \end{subfigure}
    \caption{Atari: We present two different CO-NEs computed using the FTRL-inspired approach: (SGF-NE) both agents use social grading functions and (SGF/Borda-NE) where the task agent instead uses Borda. We compare them to the NE of an agent vs task performance matrix with scores normalized to $[0, 1]$ as in~\citep{balduzzi2018re}. Figure~\ref{fig:agent_marginals_conv} supports convergence of our FTRL-inspired solver. Figure~\ref{fig:atari_regret} displays additional performance information for the SGF-NE approach.}
    \label{fig:atari_results}
\end{figure*}

\section{Applications}\label{sec:exp}

We demonstrate our theory in two different domains: a) game-theoretic evaluation of AI agents and b) a tactical voting setting with stochastic election outcomes and preferences.

\subsection{General Agent Evaluation in the Arcade Learning Environment}\label{sec:exp:atari}

We consider a game-theoretic~\citep{balduzzi2018re,liure,marris2025deviation} evaluation experiment with Atari agent vs task performance data~\citep[Table 5]{hessel2018rainbow} (c.f. Figure~\ref{fig:adv_game}). In this setting, the performance matrix $A$ plays the role of the payoff matrix in a zero-sum bi-matrix game: $\min_{x} \max_{y} x^\top A y$ where $x$ ($y$) is the distribution over agents (tasks). The solution surfaces a distribution over agents that is robust against an adversarially chosen task distribution (NE-Norm in Figure~\ref{fig:atari_results}). We instead run a version of FTRL (FTPL)~\citep{mcmahan2017survey,suggala2020follow} using our proposed regularized best response. We use social grading functions (SGF-NE) as the voting rule and assign each agent one of four grades on a task based on quartiles. Figure~\ref{fig:atari_results} also considers a heterogeneous setting where the agent player uses SGF and task player uses Borda (SGF/Borda-NE).

Figure~\ref{fig:atari_regret} presents performance of our FTRL-style solver according to the metrics discussed in Sections~\ref{sec:approx}: exploitability~\peqref{eqn:exp_strat_noreg} and hindsight winrate. All indicate our algorithm is effectively learning. Figure~\ref{fig:atari_results} displays the learned equilibrium for the agent and task player along with a plot supporting convergence of the iterates.
Prior work espousing voting-as-evaluation found the rainbow agent top-ranked according to $9$ different popular voting rules~\citep[Table 2]{lanctot2023evaluating}. Their result can be interpreted as equivalent to ours if we constrained the distribution over games to be uniform. Instead, our theory uniquely allows one to show that rainbow is the strongest agent even if the distribution over games is chosen adversarially.

\subsection{Voting Equilibrium from Human Election Data: Lost at Sea}

In the \textit{Lost at Sea} election scenario~\citep{born2022}, groups of four participants deliberate and then elect one member to complete a \emph{leader's task}, a quiz whose score determines the payout to the group.
The election proceeds as follows. Each participant submits both a \emph{willingness-to-lead} self-nomination score (\wtl{} $\in \{0, \ldots, 10\}$) and a vote ranking the other participants.
The election mechanism selects the two participants with highest \wtl{}, splitting ties randomly, and executes a runoff between them.
Note that because no one could vote for themselves, only the votes of participants that are not candidates impact the result; see more details on the election in~\appxref{app:las_instructions}.

\begin{table}[t!]
    \centering
    \caption{Election: (Pure) Strategies are \emph{not} a CO-NE.}
    \label{tab:not_pure_ne_group}
    \begin{tabular}{c|c|c|c} \toprule
        \emph{player} $i$ & \multicolumn{2}{|c|}{\emph{combinatorial action} $a$} & \emph{implicitly def.} $\rho_i(a_{-i})$ \\
        voter & vote & \wtl{} & \texttt{pref} \\ \midrule
        \edolphin{} & \eelephant{} $\succ$ \eturtle{} $\succ$ \ezebra{} & 2 & \eelephant{} $\succ$ \eturtle{} $\succ$ \ezebra{} $\succ$ \edolphin{} \\
        \eelephant{} & \ezebra{} $\succ$ \eturtle{} $\succ$ \edolphin{} & 9 & \ezebra{} $\succ$ \eturtle{} $\succ$ \eelephant{} $\succ$ \edolphin{} \\
        \eturtle{} & \eelephant{} $\succ$ \ezebra{} $\succ$ \edolphin{} & 5 & \eturtle{} $\succ$ \eelephant{} $\succ$ \ezebra{} $\succ$ \edolphin{} \\
        \ezebra{} & \eelephant{} $\succ$ \eturtle{} $\succ$ \edolphin{} & 10 & \ezebra{} $\succ$ \eelephant{} $\succ$ \edolphin{} $\succ$ \eturtle{} \\ \bottomrule
    \end{tabular}
\end{table}

Separately, participants submitted rankings (\texttt{pref}) over who is elected. Given fixed co-player actions $a_{-i}$ and the election outcomes $o_i$ resulting from each possible action $a_i \in \mathcal{A}_i$, we can rank the actions by \texttt{pref}($o_i$), e.g., if for \edolphin{}, $a \rightarrow o =$ \eelephant{} and $a' \rightarrow o' =$ \eturtle{}, then $a \succ a'$.

Voters can manipulate the election by pulling themselves out of the race (lowering \wtl{}). Alternatively, if they know that the other players prefer a different candidate, they can reorder their vote to put their second favorite higher, therefore increasing the odds of that candidate being elected.

Therefore, each participant's strategy space ($\vert \mathcal{A}_i \vert = 66$) consists of their \wtl{} ($11$ options) and vote in the election ($3! = 6$ options).
Note that because the election mechanism is stochastic, we are forced to confront the issue of how to aggregate stochastic outcomes.
We can construct player $i$'s lottery hierarchically; first sample $a_{-i} \sim x_{-i}$, then independently sample each election outcome given $a_i$ (more details in~\appxref{app:chance}).

We utilize election ranking data from \cite{qian2025maskmirror}, an online variation of the \textit{Lost at Sea} task. The dataset contains 115 elections involving 460 participants, interacting under pseudonymous, animal-based identifiers.
Assuming each voter acted deterministically, we find that approximately 30\% of the 115 elections exhibited pure maximal lottery Nash equilibrium profiles. The remaining 70\% contained at least 1 player with an incentive to deviate, suggesting humans were not voting optimally with respect to their reported preferences.

We select two elections for detailed analysis in~\appxref{appx:lost_at_sea}, one in equilibrium and one not. Election data for one of those elections is shown in Table~\ref{tab:not_pure_ne_group}.
We also demonstrate solving for an equilibrium of the election, assuming Borda as the voting rule.
We then re-use existing NFG solvers to approximate a limiting logit equilibrium (LLE)~\citep{mckelvey1995quantal}, which we then analyze. For example, in one election, \edog{} would prefer itself to win, but \edog{} is low-ranked by everyone. Given that \edog{} is unlikely to win, the LLE strategically suggests \edog{} submit a low \wtl{} to steer the election towards \ebear{}, its third favorite, given its second favorite \ecamel{} is not in the runoff. Note also that the LLEs are mixed strategies\textemdash they mix over many (vote, \texttt{wtl}) combos.

Whereas strategic voting theory assumes well-studied election rules~\citep{myerson1993theory}, our COG formalism enabled a black-box analysis of a bespoke one.

\section{Conclusion}\label{sec:conclusion}

We proposed the first (probabilistic) mixed-strategy Nash equilibrium that generalizes to games with ordinal preferences. The key was to replace expected utility maximization for aggregating payoffs with social choice functions for aggregating preferences. We prove existence, develop practical notions of approximation and algorithms with rates, and demonstrate their use in AI evaluation and (stochastic) election manipulation.

\bibliographystyle{plainnat}
\bibliography{main}

\newpage
\onecolumn
\appendix

\section{Context Ordinal Equilibria}

\subsection{Doubly Probabilistic Social Choice Functions}\label{app:dpscf}

Many voting rules satisfy or can be adapted to satisfy Definition~\ref{def:dpscf} of a doubly pSCF. Scoring rules (e.g. plurality, Borda, veto, etc.) can compute solutions taking a lottery (or distribution) over preference relations as input. So can $k$-approval. Rules are divided into $C1$, $C2$, and $C3$ by~\citet{fishburn1984probabilistic} depending on what information is needed to compute them. $C1$ uses only pairwise majority relationships (e.g., Copeland), $C2$ uses weighted pairwise majority relationships (e.g., ranked pairs, Borda), and then $C3$ is other rules (e.g., Dodgson). The lottery representation enables at least $C1$ and $C2$ rules.

\paragraph{Axioms’ Effect on dpSCFs} Social choice theory is axiomatic. We discuss a few axioms and how they are important to dpSCFs here. For example, if \emph{dictatorship} was possible, player $j$ could induce arbitrarily small mass on some action profile such that it completely overwrites player $i$'s best response regardless of the other outcomes possible; this would be akin to shifting mass onto an outcome with infinite payoff in an NFG. \emph{Clone-invariance} asserts that duplicate actions (i.e., actions that are ranked the same under every co-player action profile) will appear symmetrically in the best response; this ensures equilibria exist that have symmetric mass and payoff across cloned actions. \emph{Paretian} voting rules ensure that if a player ranks an action above another under all co-player action profiles, the aggregate voting rule will as well; translating to COGs, if an action has higher payoff than all other actions under all co-player actions, this action will be a best response. These properties can be taken for granted in classical games that use expected value to aggregate scores, but it is critical and fortuitous that social choice has already developed these tools for use in COGs. Note that Borda fails clone-invariance. It also fails Condorcet consistency, which is a major motivation for works that explore alternatives to the standard RLHF pipeline (see~\citep{maura2025jackpot} which advocates for maximal lotteries).

\subsection{COGs as NFGs}\label{app:cog:nfg}
As mentioned in Section~\ref{sec:veq:existence}, scoring and positional voting rules induce normal-form games. This can be seen by filling out player $i$'s payoff tensor $U_i(\cdot, a_{-i})$ one ``slice'' at a time. For each player $i$ and each possible co-player action profile $a_{-i}$, retrieve player $i$'s preference relation $\rho_i(a_{-i})$ which results in a vector of numerical scores for each of player $i$'s actions: $U_i(\cdot, a_{-i}) \leftarrow \rho_i(a_{-i})$. Scoring and positional voting rules simply take an average of these scores proportional to their representation in the population of votes which is precisely equivalent to computing the expected payoff for each action as is standard in NFG calculations.

\subsection{Correlated Equilibria}\label{app:corr}

Correlated equilibria (CE) are also important in classical game theory, particularly in $n$-player, general-sum settings, and have natural counterparts here.

A correlated equilibrium is a joint distribution $\vx$ over action profiles such that no player has any incentive to unilaterally deviate even after observing their recommended action (sampled from $\vx$). As before, we represent this as a best-response inclusion problem.
Define $\br_i(a_i|\vx) = \nu_i(v_i(x(\mathcal{A}_{-i}|a_i)))$ where the conditional distribution $x(\mathcal{A}_{-i}|a_i) = \vx(a_i,\mathcal{A}_{-i}) / x(a_i)$ and the marginal distribution $x(a_i) = \sum_{a'_{-i} \in \mathcal{A}_{-i}} \vx(a_i,a'_{-i})$. Define scalar-set multiplication as $a \cdot \br_i(a_i|\vx) = \{a \cdot v | v \in \br_i(a_i|\vx) \}$. Then $\mathbf{0} \in x(a_i) \cdot (\br_i(a_i \vert \vx) - e_{a_i})$ for all $i$ represents the condition for a context-ordinal CE where $e_{a_i}$ is the standard Euclidean basis vector.

\begin{figure*}[ht!]
    \centering
    \begin{subfigure}[t]{0.24\textwidth}
        \centering %
        \caption{Classic} %
        \label{fig:nwon_chicken:classic} %
        \includegraphics[width=\linewidth]{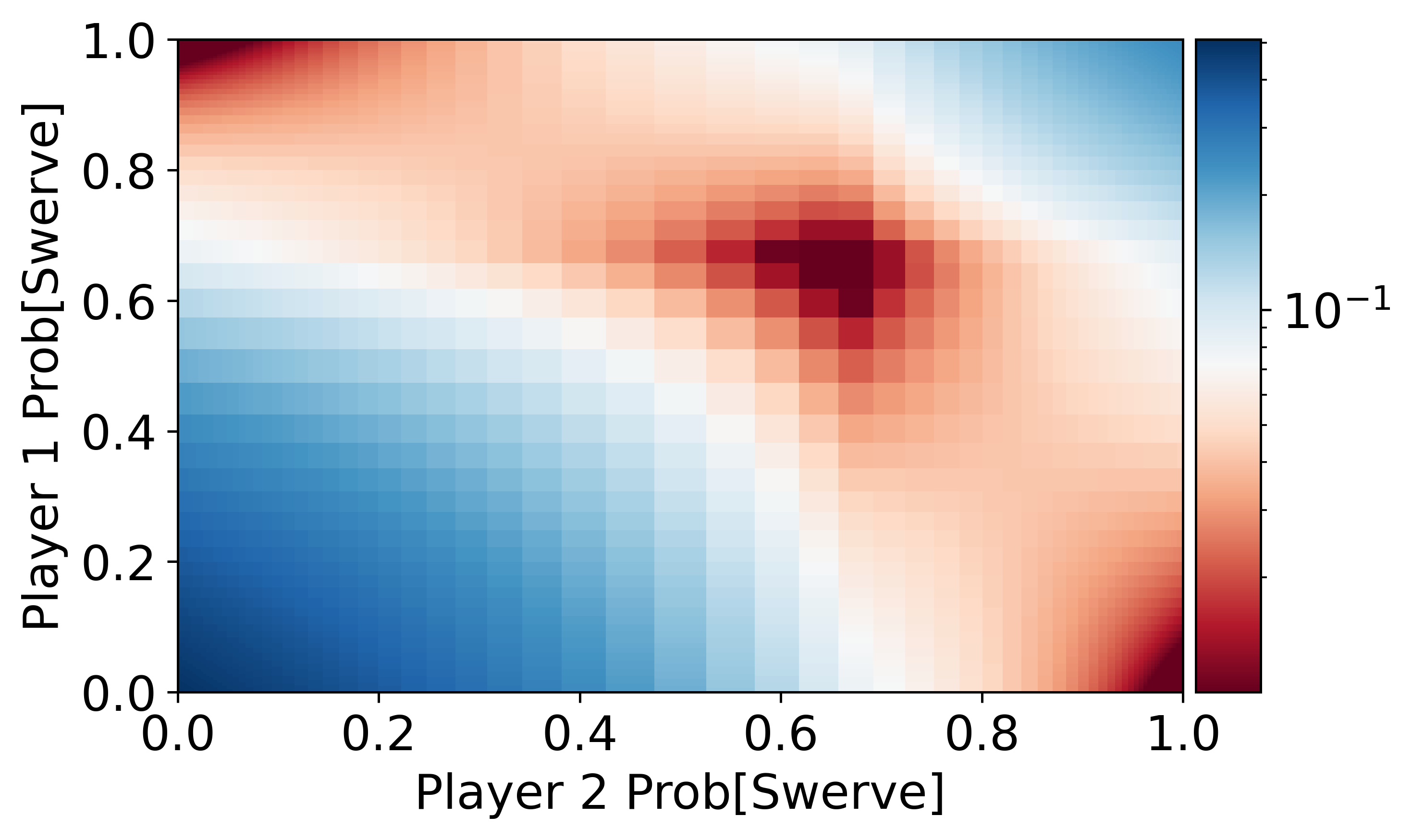}
    \end{subfigure}
    \hfill
    \begin{subfigure}[t]{0.24\textwidth}
        \centering %
        \caption{Classic (EMD)} %
        \label{fig:nwon_chicken:payoff_scoring} %
        \includegraphics[width=\linewidth]{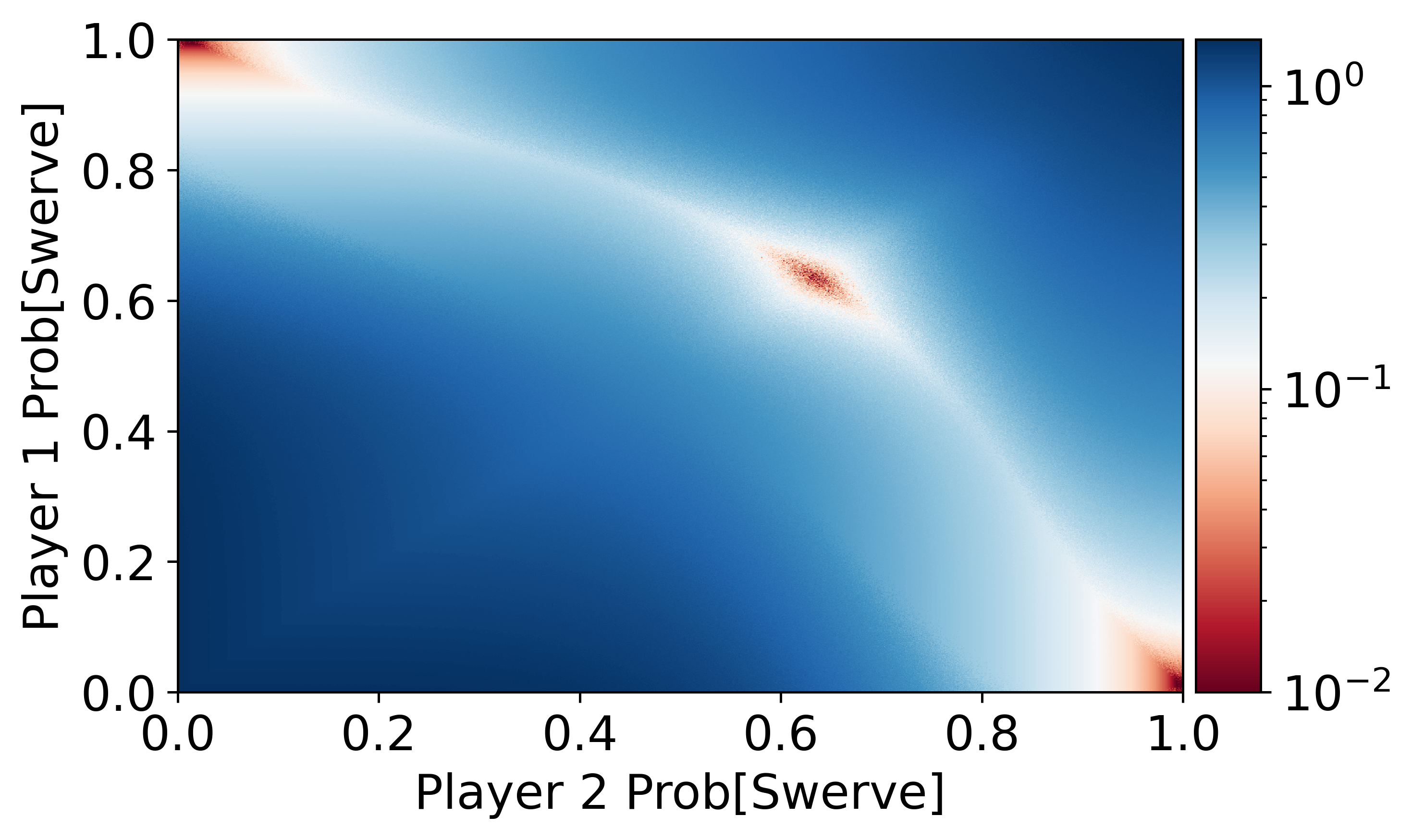}
    \end{subfigure}
    \hfill
    \begin{subfigure}[t]{0.24\textwidth}
        \centering %
        \caption{ML (EMD)} %
        \label{fig:nwon_chicken:ml} %
        \includegraphics[width=\linewidth]{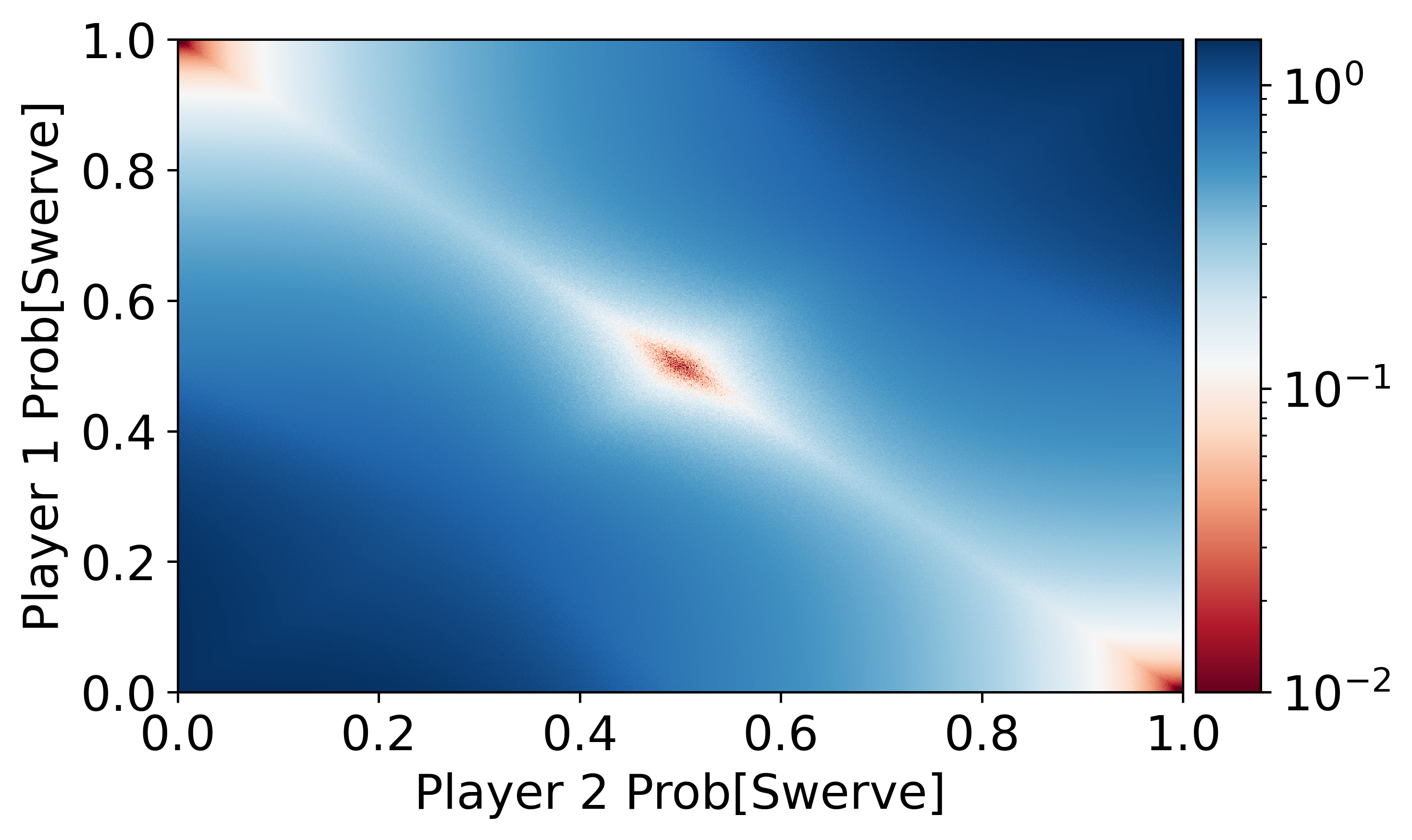}
    \end{subfigure}
    \hfill
    \begin{subfigure}[t]{0.24\textwidth}
        \centering %
        \caption{Borda (EMD)} %
        \label{fig:nwon_chicken:borda} %
        \includegraphics[width=\linewidth]{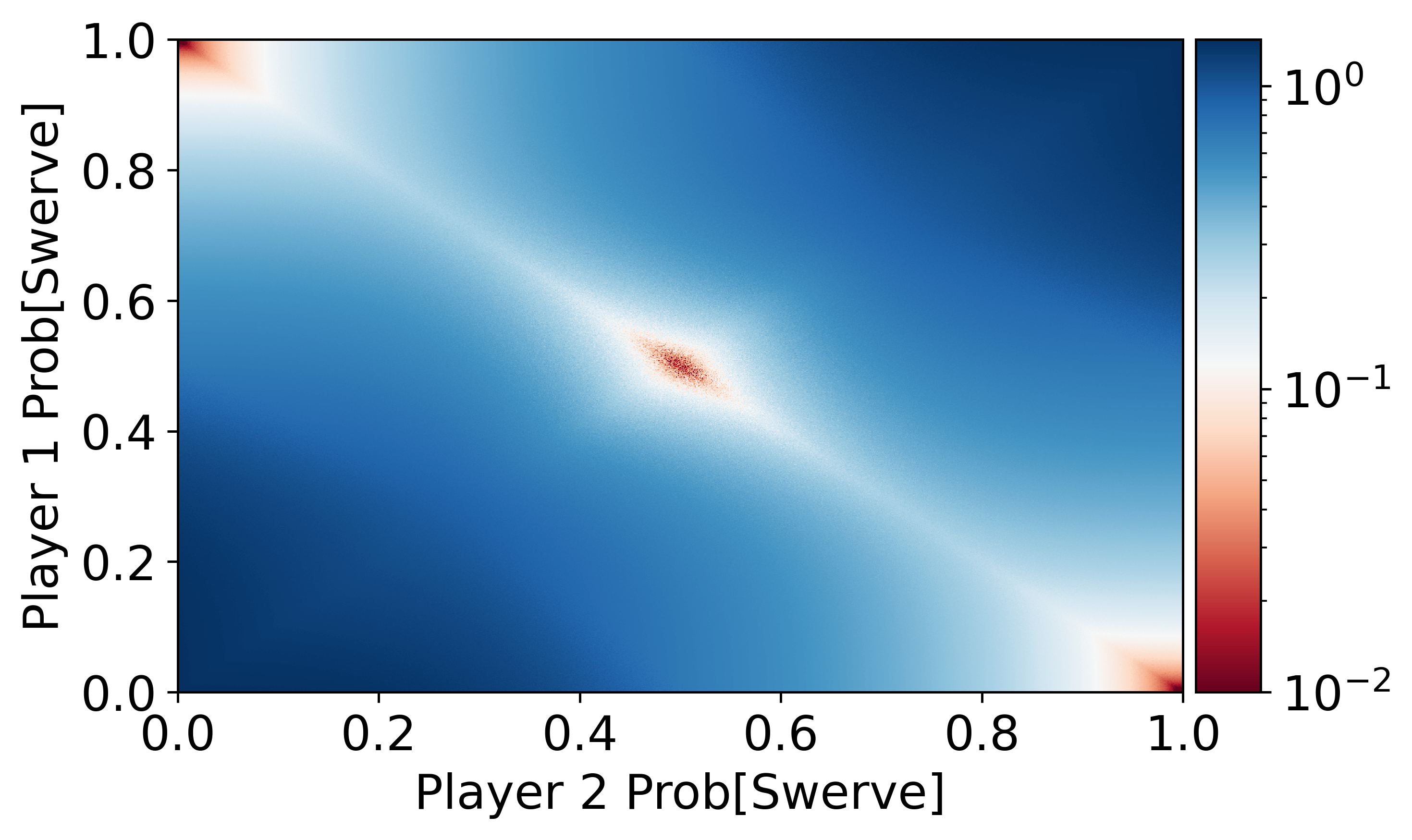}
    \end{subfigure}
    \caption{Exploitability ($\epsilon$) landscapes for a $2$-action (Swerve/Straight) Chicken game using various regularized social choice $\br^{(p=0,q=0.1,\mu=[\sfrac{1}{2}, \sfrac{1}{2}])}$s. Panel~(\subref{fig:nwon_chicken:classic}) displays traditional exploitability based on cardinal payoffs whereas~(\subref{fig:nwon_chicken:payoff_scoring}), (\subref{fig:nwon_chicken:ml}), and (\subref{fig:nwon_chicken:borda}) measure earth mover's distance (EMD). Maximal lotteries (ML) and Borda coincide in two candidate (action) settings.}
    \label{fig:nwon_chicken}
\end{figure*}

Interestingly, \emph{coarse} correlated equilibria (CCE) cannot be (classically) defined in COGs. In both NE and CE, a player considers deviating from their current strategy to an alternative strategy under a fixed context, either a co-player strategy or conditional belief. In a CCE, the expected utility of a joint distribution is compared against a fixed action. This requires aggregating utility \emph{across} contexts with different weights, so a player must effectively be able to compare their own actions under different co-player action profiles. In the COG definition, we specifically assume that a player can only rank their actions under a fixed co-player action profile, which rules out this possibility.

However, we discussed a generalized notion of \emph{external regret} in Section~\ref{sec:approx:winrate} which suggests an alternative route towards CCE. A CCE can instead be defined as a joint distribution such that no player has an incentive to deviate to any fixed strategy (in hindsight), i.e., all players simultaneously experience zero external regret.

\subsection{Exploitability Landscapes}

Figure~\ref{fig:nwon_chicken} shows how CO-NE differ from classical NE on a simple 2-player, 2-action Chicken game. Panels~\subref{fig:nwon_chicken:classic} and~\subref{fig:nwon_chicken:payoff_scoring} display the traditional exploitability landscape and earth mover's distance (EMD) landscape for a classical NE respectively. Panel~\subref{fig:nwon_chicken:ml} shows the EMD landscape for a CO-NE defined using maximal lotteries as the voting rule. Panel~\subref{fig:nwon_chicken:borda} shows the same landscape but for Borda as the voting rule.

\section{Regularized Best Response}\label{app:reg_br}

\begin{algorithm}[ht]
\caption{Regularized Best Response}
\label{alg:reg_br}
\begin{algorithmic}[1]
    \STATE Given: $p$, $\mu_i$, $\nu_i$, $\gamma(\cdot \vert x_{-i}, q)$, $M$
    \STATE $\br_i = \mathbf{0}_{\vert \mathcal{A}_i \vert}$
    \FOR{$m = 1$ to $M$}
        \STATE $\hat{x}_{-i} \sim \gamma(\cdot \vert x_{-i}, q)$
        \STATE $u \sim Cat(\mu_i)$ \alglabel{line:usurper_1}
        \STATE $v_u = \{a_u \succ a_j \,\, \vert \,\, j \in [\vert \mathcal{A}_i \vert], j \ne u \}$ \alglabel{line:usurper_2}
        \STATE $v_u = v_u \cup \{a_j \sim a_k \,\, \vert \,\, j, k \in [\vert \mathcal{A}_i \vert] / \{u\}, j \ne k \}$ \alglabel{line:usurper_3}
        \STATE $V = \{v: 0 \,\, \vert \,\, v \in \mathcal{P}(\mathcal{A}_i)\}$
        \FORALL{$a_{-i} \in \mathcal{A}_{-i}$}
            \STATE $bit ~\sim Bern(p)$
            \STATE $freq = \hat{x}_{-i}(a_{-i})$
            \IF{$bit==1$}
                \STATE $V(v_u) \mathrel{+}= freq$ \alglabel{line:usurper_4}
            \ELSE
                \STATE $V(\rho_i(a_{-i})) \mathrel{+}= freq$ \alglabel{line:no_effect}
            \ENDIF
        \ENDFOR
        \STATE $\br_i \mathrel{+}= \nu_i(V) / M$
    \ENDFOR
    \STATE Output: $\br_i$
\end{algorithmic}
\end{algorithm}

We now describe our approach to regularizing best responses. Algorithm~\ref{alg:reg_br} provides pseudocode. Recall from Definition~\ref{def:br} that every action profile $a_{-i}$ induces a vote $\rho_i(a_{-i})$ or \emph{ballot type} and this vote is represented in a population of votes with frequency $x_{-i}(a_{-i})$. We will represent random perturbations as randomly swapping votes for votes from another distribution dependent on the given $\target_i$. Consider the following random process. Draw a random \emph{usurper} from $\target_i$. Set the \emph{usurper ballot} such that the usurper is ranked strictly first; the remaining players can be ranked arbitrarily but we set all others tied for second in experiments. For every co-player action profile, replace the original ballot with the usurper ballot with probability $p$. Apply the (dpSCF) voting rule to this perturbed population of votes to compute a sampled best response. Take the expectation over these sampled best responses to return an expected best response.

If $p = 0$, then the best response remains the same as before. If $p = 1$, all ballots are replaced by the usurper ballot and any voting rule satisfying majority rule will select the usurper as the sampled best response. As the usurper is selected according to $\target_i$, the expected best response will also be equal to $\target_i$.

At this point, we have introduced two new parameters, $p$ and $\mu_i$. Denote this parameterized best response by $\br_i^{(p,\mu_i)}$.

To achieve a single-valued best response, we will replace all non-singleton best responses, which are necessarily convex subsets of the simplex, with their centroid, i.e, the uniform distribution over the winning candidates. Note this now means $\br^{(p,\mu_i)}_i$ is no longer u.h.c. as it could jump at ties, e.g., $A \succ B \rightarrow B \sim A \rightarrow B \succ A$ leads to $[1, 0] \rightarrow [0.5, 0.5] \rightarrow [0, 1]$.
Recall one of our desiderata is to ensure continuity, a stronger condition than u.h.c.

To render the best response $\br^{(p,\mu_i)}_i(x_{-i})$ continuous with respect to $x_{-i}$, we can introduce a \emph{trembling hand} by taking the expectation over a Dirichlet perturbation, $Dir(\alpha)$, with density $\rho$ where $\alpha = \mathbf{1} + \sfrac{1}{q} \, x_{-i}$ with $q > 0$:
\begin{align}
    \br_i^{(p,q,\mu_i)}(x_{-i}) &= \mathbb{E}_{x'_{-i}\sim Dir(\alpha)}[\br_i^{(p,\mu_i)}(x'_{-i})] = \int_{\mathcal{X}_{-i}} \br_i^{(p,\mu_i)}(x'_{-i}) \rho(x'_{-i} \vert \alpha) dx'_{-i}.
\end{align}
For any fixed $q > 0$, the Dirichlet density $\rho(x'_{-i} \vert \alpha)$ is continuous in its parameters $\alpha$ and thus also in $x_{-i}$. Notice that $\br_i^{(p,\mu_i)}(x'_{-i})$ is independent of $x_{-i}$, therefore the integrand is continuous in $x_{-i}$ for every fixed $x'_{-i}$. Because $q > 0$, this implies $\alpha \ge 1$ and so the density $\rho$ is uniformly bounded.

Combined with the fact that best responses are probability distributions (and thus bounded), the integrand is dominated by a constant integrable function.
We then invoke the Dominated Convergence Theorem~\citep{folland2013real}[p. 56, Theorem 2.27] to conclude that $\br^{(p,q,\mu_i)}_i(x_{-i})$ is continuous.

Lastly, note that as $\lim_{q \rightarrow 0^{+}}$, $Dir(\alpha)$ converges weakly to a Dirac delta at $x_{-i}$, meaning we achieve continuity while recovering an element of the original best response set in the limit. The Dirichlet also as full support over the simplex for $q > 0$, consistent with the ``trembling hand'' interpretation of sampling all distributions with positive probability.
In summary, $p$ regularizes $\br_i$ towards $\mu_i$; $q$ smooths.

\begin{definition}[Regularized Best Response]
Let $\br_i^{(p,q,\target_i)}(x_{-i})$ denote the expectation of a dpSCF (Def.~\ref{def:dpscf}), in which, with probability $p$, ballot types are replaced with ballots top-ranked by an action sampled from $\target_i$, non-singleton dpSCF outputs are replaced with their centroid, and the underlying voting distribution $x_{-i}$ is replaced with a full-support distribution $\gamma$ such that $\gamma$ converges to a Dirac delta distribution on $x_{-i}$ as $q \rightarrow 0$ and uniform as $q \rightarrow \infty$.
\end{definition}

\begin{proposition}
The regularized best response function $\br_i^{(p,q,\target_i)}(x_{-i})$ with $p \in [0, 1]$, $q > 0$, and $\target_i \in \Delta^{\mathcal{A}_i}$ satisfies desiderata~\peqref{des:target},~\peqref{des:unique},~\peqref{des:cont}. In the limit $q \rightarrow 0$,~\peqref{des:no_effect} is satisfied.
\end{proposition}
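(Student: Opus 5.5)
I will verify the four desiderata one at a time, working directly from the construction in Algorithm~\ref{alg:reg_br} and the Definition of the regularized best response. Throughout, the inner object is $\br_i^{(p,\mu_i)}(x'_{-i})$. This is the centroid-replaced dpSCF output on the perturbed population. It is a single point of $\Delta^{\mathcal{A}_i}$ for each fixed $x'_{-i}$, obtained as an expectation over the Bernoulli coin flips and the usurper draw $u\sim\mathrm{Cat}(\mu_i)$. The outer operation is the expectation over $x'_{-i}\sim\gamma(\cdot\mid x_{-i},q)$, with $\gamma = Dir(\mathbf{1}+\tfrac1q x_{-i})$ as a concrete instance.

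For \peqref{des:target}, set $p=1$. Every ballot type is replaced by the usurper ballot $v_u$, so the perturbed population is the point mass on $v_u$, whatever $x'_{-i}$ is. I will assume, as the appendix text does, that $\nu_i$ satisfies a unanimity/majority property: a unanimous population with $u$ strictly top elects exactly $\{e_u\}$. This output is a singleton, so the centroid step is inert. Averaging over $u\sim\mu_i$ then gives $\mu_i$, and the outer Dirichlet expectation of a constant leaves it unchanged. I will state this assumption explicitly, since it is needed; without it \peqref{des:target} can fail.

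For \peqref{des:unique}, the centroid replacement makes each sampled output a single vector. Expectations of single vectors over the coins, the usurper, and $\gamma$ are single vectors, and each stays in the simplex by convexity. So the map is a function. For \peqref{des:cont}, I will reuse the dominated-convergence argument given just before the Definition. First, $\br_i^{(p,\mu_i)}(x'_{-i})$ does not depend on $x_{-i}$, and it is bounded because it lies in the simplex. Second, it must be measurable in $x'_{-i}$; I will take this as given for piecewise-constant outputs of voting rules. Third, the Dirichlet density $\rho(x'_{-i}\mid\alpha)$ is jointly continuous in $\alpha$. For $q>0$ we have $\alpha\ge 1$, and $\alpha$ ranges in a compact set because $x_{-i}$ is in the simplex, so the density is uniformly bounded. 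Since $\alpha$ is continuous in $x_{-i}$, dominated convergence (Folland, Thm.~2.27) gives continuity of the integral along any sequence $x_{-i}^t\to x_{-i}$.

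The main obstacle is the limit claim \peqref{des:no_effect}. As $q\to 0$, $\gamma$ converges weakly to $\delta_{x_{-i}}$. However, $\br_i^{(0,\mu_i)}$ is discontinuous at ties, so weak convergence does not give convergence of the expectation to $\br_i^{(0,\mu_i)}(x_{-i})$. Instead I will argue that the limit lies in the convex set $\br_i(x_{-i})$, using the upper hemicontinuity of $\br_i$ (the hypothesis of Theorem~\ref{thm:exist}). Fix $\varepsilon>0$. By u.h.c., there is a neighbourhood $U$ of $x_{-i}$ such that $\br_i(x'_{-i})$ lies in the $\varepsilon$-neighbourhood of $\br_i(x_{-i})$ for all $x'_{-i}\in U$. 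The centroid of $\br_i(x'_{-i})$ is in $\br_i(x'_{-i})$ by convexity, so on $U$ the centroid-selected point is within $\varepsilon$ of $\br_i(x_{-i})$. Weak convergence gives $\gamma(U)\to 1$. The expectation therefore splits into a part of mass at least $1-\delta$ that is an average of points within $\varepsilon$ of the closed convex set $\br_i(x_{-i})$, plus a remainder of mass at most $\delta$ that is bounded in the simplex. Hence every limit point of $\br_i^{(0,q,\mu_i)}(x_{-i})$ as $q\to0$ is within $\varepsilon+2\delta$ of $\br_i(x_{-i})$. Since $\varepsilon$ and $\delta$ are arbitrary and $\br_i(x_{-i})$ is closed, the limit lies in $\br_i(x_{-i})$. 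A limit point exists by compactness of the simplex. At $p=0$ no ballots are usurped, so Algorithm~\ref{alg:reg_br} applies $\nu_i$ to the unperturbed votes, which is what this argument needs.
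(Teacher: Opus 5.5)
Your proof is correct. For (C1), (C3) and (C4) it follows the paper's argument:
\begin{itemize}
\item a majority/unanimity property of $\nu_i$ makes the all-usurper population elect $e_u$, so averaging over $u\sim\target_i$ returns $\target_i$;
\item centroid replacement gives single-valuedness;
\item the Dirichlet trembling hand with dominated convergence gives continuity.
\end{itemize}
Your remark that the uniform density bound needs $\alpha$ to stay in the bounded box $[1,1+\tfrac1q]^{|\mathcal{A}_{-i}|}$, and not merely $\alpha\ge 1$, slightly tightens the paper's sentence.

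The genuine difference is (C2). The paper only observes that $Dir(\alpha)$ converges weakly to $\delta_{x_{-i}}$, and concludes that an element of the original best response set is recovered. But weak convergence controls expectations of continuous integrands only. The centroid selection $\br_i^{(0,\target_i)}$ is discontinuous at ties, which is exactly where (C2) is delicate.

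Your argument does the real work. It combines four ingredients:
\begin{itemize}
\item u.h.c.\ of $\br_i$ (its neighbourhood form follows from the paper's sequential definition because the simplex is compact);
\item the centroid lying in the convex set $\br_i(x'_{-i})$;
\item convexity of the $\varepsilon$-neighbourhood of the convex set $\br_i(x_{-i})$;
\item $\gamma(U)\to 1$.
\end{itemize}
Together these prove that every limit point as $q\to 0$ lies in $\br_i(x_{-i})$. That is the right reading of ``an element of $\br_i$'', since the limit need not be the centroid of $\br_i(x_{-i})$. The argument also works for any $\gamma$ concentrating on $x_{-i}$, not just the Dirichlet.

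The cost is a hypothesis (u.h.c.\ of the unregularized rule) that the proposition leaves unstated. That hypothesis cannot be dropped. Take two actions and two co-player actions carrying opposed votes $A\succ B$ and $B\succ A$, with plurality and ties broken toward $A$. At $x_{-i}=(\tfrac12,\tfrac12)$ the perturbation is a symmetric Beta, so the regularized response at $p=0$ equals $(\tfrac12,\tfrac12)$ for every $q>0$, while $\br_i(x_{-i})=\{e_A\}$. So your version both repairs the paper's step and pins down the class of voting rules (the u.h.c.\ ones) for which the limit claim actually holds.
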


\begin{figure}[t!]
    \centering
    \includegraphics[width=0.4\linewidth]{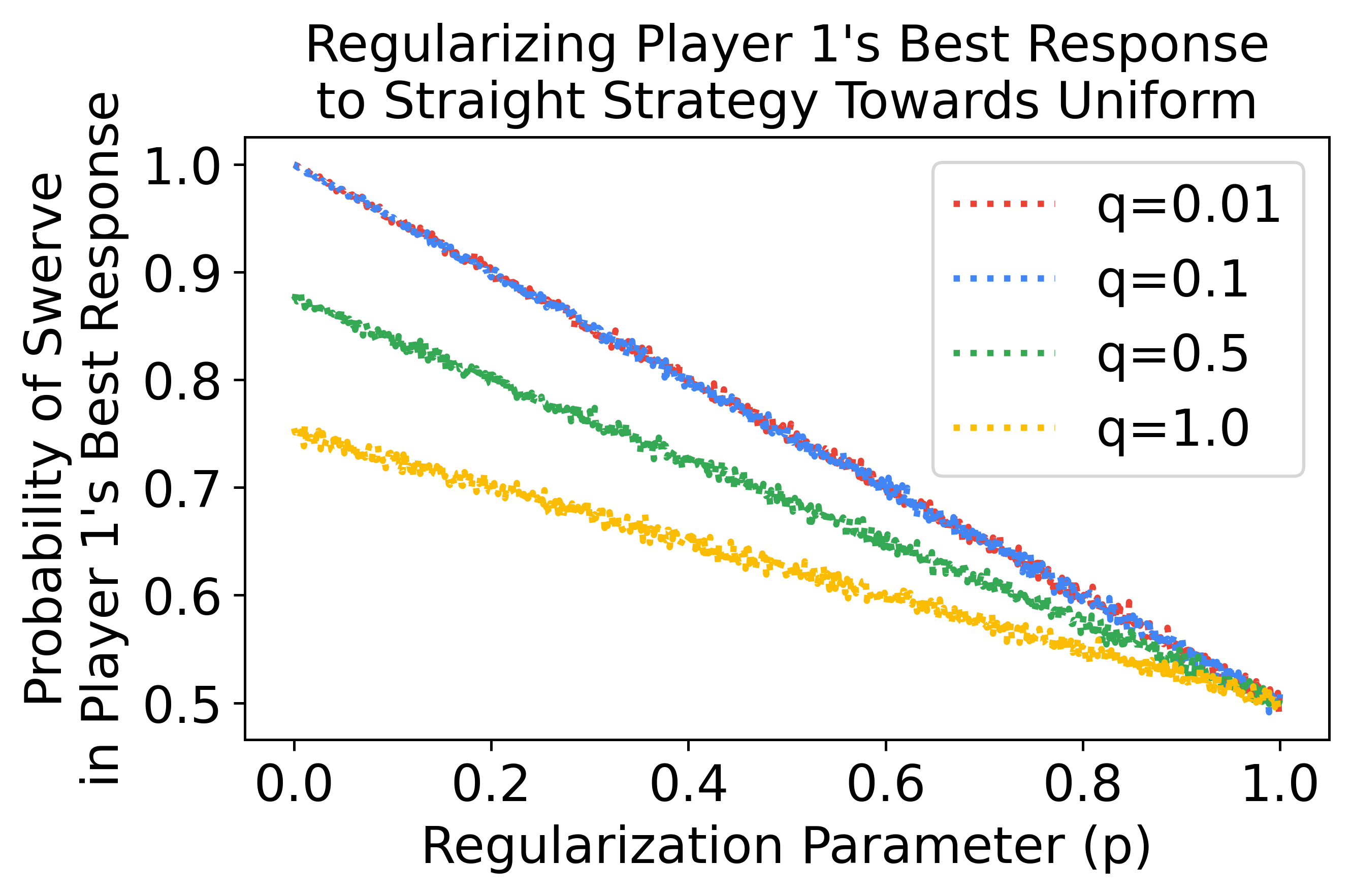}
    \caption{Best Response Regularization in Chicken.}
    \label{fig:br_ref}
\end{figure}

\subsection{Empirical Demonstration}\label{app:br_reg}

Figure~\ref{fig:br_ref} displays the results of a simple experiment, regularizing a maximal lottery best response in a chicken game towards a uniform strategy. A chicken game is a symmetric game in which you prefer to go ``straight'' unless your co-player does, in which case you would rather ``swerve'' to avoid a collision, i.e., $\rho$(``swerve'') $\rightarrow$ ``straight'' $\succ$ ``swerve'' and $\rho$(``straight'') $\rightarrow$ ``swerve'' $\succ$ ``straight''.
We find that setting $q=0.1$ is sufficient to achieve both continuity and correctness of the best response over the range of $p$.

For Figures~\ref{fig:nwon_chicken:classic} and~\ref{fig:nwon_chicken:payoff_scoring}, the precise payoff matrices used for player $1$ and $2$ were $U^{(1)} = \begin{bmatrix}
3/4 & 1/2 \\
1 & 0
\end{bmatrix}$
and 
$U^{(2)} = \begin{bmatrix}
3/4 & 1 \\
1/2 & 0
\end{bmatrix}$
respectively.

\section{Distortion}\label{app:distortion}

Traditionally, distortion expresses the ratio of the maximum social welfare possible given knowledge of voters' cardinal scores for each candidate to the social welfare achieved by a voting rule applied to those same voters' preferences. Bounds on distortion are typically derived assuming voters' cardinal scores are non-negative and sum-to-1. Many voting rules process votes as ordinal rankings which destroys the more detailed cardinal information which means distortion is commonly strictly greater than 1.

We derive a lemma below that extends a distortion bound given each voter $k$'s scores sum-to-$s_k$.

In what follows, election outcomes may be probabilistic, which we represent as a distribution $\vx$ over candidates. The expected welfare $SW^s$ of this election outcome given access to the scaled voting scores is
\begin{align}
    SW^s(\vx) &= \frac{1}{K}\sum_k s_k v_k^\top \vx
\end{align}
where $K$ is the number of voters and $v_k$ is a vector of voter $k$'s unit-normalized scores for each candidate.

\begin{replemma}{lemma:scaled_distortion}
Assume $\nu$ is a voting rule that only processes ordinal rankings. Then
\begin{align}
d(\nu, \vs) \le \kappa d(\nu)    
\end{align}
where $\kappa = \frac{\max_k s_k}{\min_k s_k}$ and $d(\nu)$ is an upper bound on the distortion of the voting rule $\nu$ when each voter's ballot scores are non-negative and sum-to-1.
\end{replemma}
\begin{proof}
Let $\vx_{cs}$ be the distribution over candidates that maximizes welfare assuming access to the voters' \emph{scaled} cardinal scores:
\begin{align}
    \vx_{cs} &= \argmax_{\vx_{cs}} \frac{1}{K} \sum_k s_k v_k^\top \vx_{cs}.
\end{align}
Let $\vx_c$ be the distribution over candidates that maximizes welfare assuming access to the voters' \emph{un-scaled}, i.e., unit-normalized, cardinal scores:
\begin{align}
    \vx_{c} &= \argmax_{\vx_{c}} \frac{1}{K} \sum_k v_k^\top \vx_{c}.
\end{align}
The social welfare of $\vx_{cs}$ is then upper bounded as
\begin{align}
    SW^s(\vx_{cs}) &= \frac{1}{K} \sum_k s_k v_k^\top \vx_{cs} \le \frac{1}{K} (\max_k s_k) (\sum_k v_k^\top \vx_{cs}) \le \frac{1}{K} (\max_k s_k) (\sum_k v_k^\top \vx_{c}).
\end{align}

Let $x_{\nu}$ be the distribution over candidates returned by the voting rule $\nu$. The social welfare of $\vx_{\nu}$ is then lower bounded as
\begin{align}
    SW^s(\vx_{\nu}) &= \frac{1}{K} \sum_k s_k v_k^\top \vx_{\nu} \ge \frac{1}{K} (\min_k s_k) (\sum_k v_k^\top \vx_{\nu}).
\end{align}

The distortion is the ratio of the two and is upper bounded as
\begin{align}
d(\nu, \vs) &= \frac{SW^s(\vx_{cs})}{SW^s(\vx_{\nu})} \le \frac{\max_k s_k}{\min_k s_k} \frac{\sum_k v_k^\top \vx_{cs}}{\sum_k v_k^\top \vx_{\nu}}
\\ &\le \frac{\max_k s_k}{\min_k s_k} \frac{\sum_k v_k^\top \vx_{c}}{\sum_k v_k^\top \vx_{\nu}}
\\ &= \frac{\max_k s_k}{\min_k s_k} d(\nu).
\end{align}

\end{proof}

Given a strategy profile $\vx$ for an $n$-player COG, we can use Lemma~\ref{lemma:scaled_distortion} above to understand the suboptimality of a player's best response if they opt for a social choice rule (dpSCF) rather than a traditional best response using expected utility theory. To apply Lemma~\ref{lemma:scaled_distortion}, we will assume the underlying payoffs of the game are non-negative with strictly positive sum. Note that an affine transformation of the payoff matrix does not change the set of equilibria so this assumption is without loss of generality.

Simply let $\vs_i$ be a vector containing the payoff sums for player $i$ under each possible joint action of the co-players. Then apply Lemma~\ref{lemma:scaled_distortion} by looking up the distortion bound for the chosen voting rule $\nu_i$.

\begin{reptheorem}{lemma:scaled_distortion}
Assume $\nu$ is a voting rule that only processes ordinal rankings. Then distortion
\begin{align}
d^+(\nu, \vs) \le \kappa^+ + (\min_k s_k) d^+(\nu)
\end{align}
where $\kappa = (\max_k s_k - \min_k s_k)$ and $d^+(\nu)$ is an upper bound on the \emph{additive} distortion of the voting rule $\nu$ when each voter's ballot scores are non-negative and sum-to-1.
\end{reptheorem}
\begin{proof}
As in Lemma~\ref{lemma:scaled_distortion}, let $SW^s$ denote the social welfare function with scaled votes. Let $SW$ denote the social welfare function assuming unit-normalized scores, which implies $SW \le 1$. We are interested in the additive distortion:
\begin{align}
    d^+(\nu, \vs) &= SW^s(\vx_{cs}) - SW^s(\vx_{\nu})
    \\ &\le (\max_k s_k) SW(\vx_{c}) - (\min_k s_k)  SW(\vx_{\nu})
    \\ &= (\max_k s_k - \min_k s_k) SW(\vx_{c}) + (\min_k s_k) ( SW(\vx_{c}) - SW(\vx_{\nu}))
    \\ &\le \kappa^+ + (\min_k s_k) d^+(\nu).
\end{align}
\end{proof}

\begin{theorem}\label{thm:reg_distortion}
The additional distortion introduced by our regularization process is 
\begin{align}
    d^+(\nu_{p,i}, u_i, x_{-i}) &\le \kappa^+ + (\min_k s_k) (d^+(\nu) + \big( 1 - (1-p)^{\vert \mathcal{A}_{-i} \vert} \big)).
\end{align}
By Bernoulli's inequality, the quantity $\big( 1 - (1-p)^{\vert \mathcal{A}_{-i} \vert} \big)$ behaves as $\vert \mathcal{A}_{-i} \vert p$ for small $p$.
\end{theorem}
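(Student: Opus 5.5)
The plan is to reduce the regularized rule $\nu_{p,i}$ (the voting rule $\nu_i$ applied after the random usurper replacement of Algorithm~\ref{alg:reg_br}) to the unregularized rule via a coupling on the coin flips. Then I would pay for the regularization only on the event where some coin flip actually changes a ballot.

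\textbf{Step 1: condition on the coins.} For each co-player profile $a_{-i}\in\mathcal{A}_{-i}$, the algorithm flips an independent $\mathrm{Bern}(p)$ coin. Let $E$ be the event that no coin comes up heads; then $\Pr(E)=(1-p)^{\vert\mathcal{A}_{-i}\vert}$. On $E$ the perturbed population coincides with $v_i(\hat x_{-i})$, so the sampled best response is exactly $\nu_i$ applied to the (Dirichlet-smoothed) population. On the complement $E^c$, the output is some distribution over $\mathcal{A}_i$, and I only need a crude bound on its welfare loss.

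\textbf{Step 2: bound the unit-normalized welfare loss.} I would work in the unit-normalized setting of the proof of Theorem~\ref{lemma:scaled_distortion}, where $SW\in[0,1]$. The regularized output is the mixture $\Pr(E)\,\vx_\nu+(1-\Pr(E))\,\vx'$ for some distribution $\vx'$. Linearity of $SW$ in the lottery then gives
\begin{align}
SW(\vx_c)-SW(\vx_{\nu_p}) &= \Pr(E)\big(SW(\vx_c)-SW(\vx_\nu)\big)+(1-\Pr(E))\big(SW(\vx_c)-SW(\vx')\big)
\end{align}
The first term is at most $d^+(\nu)$, and the second is at most $1-\Pr(E)=1-(1-p)^{\vert\mathcal{A}_{-i}\vert}$ because $0\le SW\le 1$. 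Hence $d^+(\nu_{p})\le d^+(\nu)+1-(1-p)^{\vert\mathcal{A}_{-i}\vert}$.

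\textbf{Step 3: plug into Theorem~\ref{lemma:scaled_distortion}.} Its proof only uses that $\nu$ returns some lottery and that $d^+(\nu)$ bounds the unit-normalized additive gap. Replacing $d^+(\nu)$ by the bound from Step 2 therefore gives the stated inequality for $d^+(\nu_{p,i},u_i,x_{-i})$. The final remark follows from Bernoulli's inequality, $(1-p)^m\ge 1-mp$, which yields $1-(1-p)^m\le mp$.

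\textbf{Main obstacle.} The delicate step is the averaging over the Dirichlet smoothing (parameter $q$) and the centroid tie-breaking. The distortion $d^+(\nu)$ is stated for $\nu$ applied to the true population, but on $E$ the rule sees $\hat x_{-i}$, not $x_{-i}$. I would try first to treat $q\to0$, or to note that $d^+(\nu)$ is a worst case over all vote populations. Ties resolved by a centroid are just another lottery over the winning set, so they stay within $d^+(\nu)$, provided $d^+(\nu)$ is defined for every selection from $\br_i$. This same worst-case-over-populations framing should cover the smoothing step, since each realized $\hat x_{-i}$ is itself a legitimate vote population.
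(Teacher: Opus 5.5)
Your proposal is correct and is essentially the paper's proof: condition on the no-replacement event of probability $(1-p)^{\vert\mathcal{A}_{-i}\vert}$, write the regularized output as a mixture of $\vx_\nu$ and some other lottery, use linearity of $SW$ and $SW\in[0,1]$ to charge at most $1-(1-p)^{\vert\mathcal{A}_{-i}\vert}$ extra, then feed this into the proof of Theorem~\ref{lemma:scaled_distortion} and finish with Bernoulli. The paper inserts $SW(\vx_\nu)$ as an intermediate term rather than comparing both branches to $\vx_c$, which is immaterial, and your split even gives the slightly sharper $(1-p)^{\vert\mathcal{A}_{-i}\vert}d^+(\nu)+1-(1-p)^{\vert\mathcal{A}_{-i}\vert}$.

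On your ``main obstacle'': the paper simply treats $\nu_{p,i}$ as the $p$-perturbed rule applied to the true population, with no Dirichlet smoothing, which matches your $q\to0$ reading. Your alternative worst-case-over-populations fix would not suffice on its own. The distortion $d^+(\nu)$ at $\hat x_{-i}$ bounds the loss relative to the optimum for $\hat x_{-i}$, but the welfare here is weighted by the true $x_{-i}$. That mismatch leaves an extra error of order $\mathbb{E}\Vert\hat x_{-i}-x_{-i}\Vert_1$.
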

\begin{proof}
The regularized voting process replaces each vote with a usurper vote with probability $p$. Therefore, with probability $(1-p)^{\vert \mathcal{A}_{-i} \vert}$, no votes are replaced. The voting rule returns $\vx_{\nu}$ in this case. Taking the expectation of these outputs, we can then determine
\begin{align}
    \vx_{\nu_p} &= (1-p)^{\vert \mathcal{A}_{-i} \vert} \vx_{\nu} + \big( 1 - (1-p)^{\vert \mathcal{A}_{-i} \vert} \big) \vz
\end{align}
where $x_{\nu_p}$ denotes the perturbed output (regularized best response) and $\vz$ denotes the expected output under the remaining perturbation events. The distortion of our perturbed voting rule, denoted $\nu_p$, can be decomposed into the distortion of the original voting rule $\nu$ and the gap between the perturbed and original:
\begin{align}
    d^+(\nu_p) &= \underbrace{SW(\vx_{cs}) - SW(\vx_{\nu})}_{\text{distortion}} + \underbrace{SW(\vx_{\nu}) - SW(\vx_{\nu_p})}_{\text{perturbation error}}
    \\ &\le d^+(\nu) + SW(\vx_{\nu}) - SW(\vx_{\nu_p})
    \\ &= d^+(\nu) + SW(\vx_{\nu}) - SW((1-p)^{\vert \mathcal{A}_{-i} \vert} \vx_{\nu} + \big( 1 - (1-p)^{\vert \mathcal{A}_{-i} \vert} \big) \vz)
    \\ &= d^+(\nu) + \big( 1 - (1-p)^{\vert \mathcal{A}_{-i} \vert} \big) \big( SW(\vx_{\nu}) - SW(\vz) \big) \label{eqn:sw_lin}
    \\ &\le d^+(\nu) + \big( 1 - (1-p)^{\vert \mathcal{A}_{-i} \vert} \big) \label{eqn:sw_one}
    \\ &\le d^+(\nu) + \vert \mathcal{A}_{-i} \vert p
\end{align}
where~\peqref{eqn:sw_lin} follows from linearity of social welfare,~\peqref{eqn:sw_one} from social welfare being bounded to $[0,1]$, and the last step from Bernoulli's inequality.

Plugging this result into Theorem~\ref{lemma:scaled_distortion} achieves the claim.
\end{proof}

\subsection{Weakened Fictitious-Play (WFP)}

We consider fictitious-play run for $T$ rounds with approximate best responses whose error is upper bounded by $d^+(\nu, \vs)$. We can trace the argument put forth in~\citep{conitzer2009approximation} to prove the following. 

\begin{repcorollary}{cor:cofp}[Theorem 1~\citep{conitzer2009approximation}]
The fictitious-play profile $\vx_T = [x^{(1)}_T, x^{(2)}_T]$ after $T$ rounds is an $\epsilon_T$-NE where $\epsilon_T \le \frac{T+1}{2T} + \frac{1}{2} d^+(\nu, \vs)$.
\end{repcorollary}
\begin{proof}
We simply trace~\citep[Theorem 1]{conitzer2009approximation} with approximate best responses.

By symmetry, it suffices to show that $x^{(1)}_T = \frac{1}{T} \sum_{t=1}^T x^{(1)}_t$ is an $\epsilon_T$-best response to $x^{(2)}_T = \frac{1}{T} \sum_{t=1}^T x^{(2)}_t$. Let $\br_1$ be a best response to $x^{(2)}_T$. The corresponding best-response utility for player $1$ is $u_1(\br_1, x^{(2)}_T) = \frac{1}{T} \sum_{t=1}^T u_1(\br_1, x^{(2)}_t)$. For $2 \le t' \le T + 1$, because $x^{(1)}_{t'}$ is a best response to $x^{(2)}_{t' - 1}$, all utilities are non-negative, and the utilities are bilinear, we have
\begin{align}
u_1(x^{(1)}_{t'}, x^{(2)}_{T}) &= u_1(x^{(1)}_{t'}, \frac{1}{T} \sum_{t=1}^T x^{(2)}_t)
\\ &= \sum_{t=1}^T (\sfrac{1}{T}) u_1(x^{(1)}_{t'}, x^{(2)}_t)
\\ &= \sum_{t=1}^{t' - 1} (\sfrac{1}{T}) u_1(x^{(1)}_{t'}, x^{(2)}_t) + \sum_{t=t'}^{T} (\sfrac{1}{T}) u_1(x^{(1)}_{t'}, x^{(2)}_t)
\\ &\ge \sum_{t=1}^{t' - 1} (\sfrac{1}{T}) u_1(x^{(1)}_{t'}, x^{(2)}_t)
\\ &= \big( \frac{t' - 1}{T} \big) u_1(x^{(1)}_{t'}, \frac{1}{(t' - 1)} \sum_{t=1}^{t' - 1} x^{(2)}_t)
\\ &= \big( \frac{t' - 1}{T} \big) u_1(x^{(1)}_{t'}, x^{(2)}_{t' - 1})
\\ &\ge \big( \frac{t' - 1}{T} \big) [u_1(\br_1, x^{(2)}_{t' - 1}) - d^+(\nu, \vs)] \label{eqn:better_response}
\\ &= u_1(\br_1, \frac{1}{T} \sum_{t=1}^{t' - 1} x^{(2)}_t) - \sum_{t=1}^{t' - 1} (\sfrac{1}{T}) d^+(\nu, \vs)
\\ &= \sum_{t=1}^{t' - 1} (\sfrac{1}{T}) [u_1(\br_1, x^{(2)}_t) - d^+(\nu, \vs)]
\end{align}
where~\peqref{eqn:better_response} follows by the fact that $x^{(1)}_{t'}$ is an approximate best response to $x^{(2)}_{t' - 1}$; most importantly, it at least $-d^+(\nu, \vs)$ better than any other strategy (including $\br_1$) in responding to $x^{(2)}_{t' - 1}$.

Continuing, for the case where player 1 plays $x^{(1)}_T$, we have
\begin{align}
    u_1(x^{(1)}_T, x^{(2)}_T) &=  u_1(\frac{1}{T} \sum_{t'=1}^T x^{(1)}_{t'}, x^{(2)}_T)
    \\ &= \sum_{t'=1}^T (\sfrac{1}{T}) u_1(x^{(1)}_{t'}, x^{(2)}_T)
    \\ &= (\sfrac{1}{T}) u_1(x^{(1)}_{t'=1}, x^{(2)}_T) + \sum_{t'=2}^T (\sfrac{1}{T}) u_1(x^{(1)}_{t'}, x^{(2)}_T)
    \\ &\ge 0 + \sum_{t'=2}^T (\sfrac{1}{T}) u_1(x^{(1)}_{t'}, x^{(2)}_T)
    \\ &\ge \sum_{t'=1}^T (\sfrac{1}{T}) \sum_{t=1}^{t' - 1} (\sfrac{1}{T}) [u_1(\br_1, x^{(2)}_t) - d^+(\nu, \vs)] \quad \text{summand equals $0$ for $t'=1$}
    \\ &= (\sfrac{1}{T^2}) \sum_{t=1}^{T-1} \sum_{t'=t+1}^{T} [u_1(\br_1, x^{(2)}_t) - d^+(\nu, \vs)] \quad \text{re-index sum of lower triangular ($t' \times t$) matrix}
    \\ &= (\sfrac{1}{T^2}) \sum_{t=1}^{T} (T - t) [u_1(\br_1, x^{(2)}_t) - d^+(\nu, \vs)].
\end{align}

On the other hand,
\begin{align}
u_1(\br_1, x^{(2)}_T) &= u_1(\br_1, \frac{1}{T} \sum_{t=1}^T x^{(2)}_t)
\\ &= (\sfrac{1}{T}) \sum_{t=1}^T u_1(\br_1, x^{(2)}_t)
\\ &= (\sfrac{1}{T^2}) \sum_{t=1}^T T u_1(\br_1, x^{(2)}_t).
\end{align}

It follows that the suboptimality for player $1$ of playing $x^{(1)}_T$ is
\begin{align}
u_1(\br_1, x^{(2)}_T) - u_1(x^{(1)}_T, x^{(2)}_T) &\le (\sfrac{1}{T^2}) \sum_{t=1}^{T} t u_1(\br_1, x^{(2)}_t) + (\sfrac{1}{T^2}) \sum_{t=1}^{T} (T - t) d^+(\nu, \vs)
\\ &\le (\sfrac{1}{T^2}) \sum_{t=1}^{T} t + (\sfrac{1}{T^2})  d^+(\nu, \vs) \sum_{t=1}^{T} (T - t)
\\ &= (\sfrac{1}{T^2}) (T + 1) (\sfrac{T}{2}) + (\sfrac{1}{T^2})  d^+(\nu, \vs) \sum_{t=0}^{T-1} t
\\ &= (\sfrac{1}{T^2}) (T + 1) (\sfrac{T}{2}) + (\sfrac{1}{T^2})  d^+(\nu, \vs) (T - 1) (\sfrac{T}{2})
\\ &= \frac{(T + 1)}{(2T)} + \frac{(T - 1)}{(2T)} d^+(\nu, \vs).
\end{align}
\end{proof}

\subsubsection{Distortion of Regularized Best Responses}

The bounds above are derived for unregularized best responses. There exists a new result in the distortion literature that examines multiplicative distortion under a different perturbation model~\citep{alipour2026utilitarian}. If this line of work were to continue and uncover additive bounds, we might be able to apply them here to obtain bounds for a different form of regularized best responses.

\subsubsection{First Order Stochastic Dominance}

We clarify a relationship between first order stochastic dominance (FSD) and social choice based best responses. Within the context of COGs, let $x_i$ and $x_i'$ be two mixed strategies from which we can sample actions $a_i$ and $a_i'$. Let $u_i: \mathcal{A} \rightarrow \mathbb{R}$ be any isotone (order-preserving) utility function that is consistent with the given preference relation $\rho_i$. In other words, if $a_i \succ a_i'$ in the context of $a_{-i}$, then $u_i(a_i, a_{-i}) > u_i(a'_i, a_{-i})$ and if $a_i \sim a_i'$ in the context of $a_{-i}$, then $u_i(a_i, a_{-i}) = u_i(a'_i, a_{-i})$. Then $x$ strictly FSD-dominates $x_i'$ if $\mathbb{E}[u_i(x_i, x_{-i})] > \mathbb{E}[u_i(x_i', x_{-i})]$ holds for every $u_i$ where $u_i$ has been extended to act on distributions in the standard way.

Consider $x_i'$ player $i$'s current strategy. Assume $x_i$ strictly FSD-dominates $x_i'$ as above. Under which voting rules does player $i$ have a strict incentive to deviate?

Within social choice theory, the property that an aggregation rule never selects a lottery that is stochastically dominated by another is known as SD-efficiency.

Both positional scoring rules and Maximal Lotteries are SD-efficient, hence, no strategies in their induced best responses sets will ever be FSD-dominated by another strategy.

\section{Complexity}\label{app:complexity}

It is natural to attempt to understand the complexity of computing voting equilibria. Given two-player, zero-sum games represent a natural complexity boundary in the classical payoff setting, we first explore whether this boundary translates to the voting setting.

\subsection{Two-Player, Zero-Sum}

First off, it is not immediately possible to translate the notion of 2-player, zero-sum to a COG. First, there no longer exist payoffs that can be summed. Second, players are not required to express preferences over unilateral changes in actions by the other player.

As mentioned in Section~\ref{sec:veq:existence}, score and positional voting rules induce normal-form games with payoffs from which we can then analyze traditional \tpzs{} definitions. However, we show that COGs defined to be intuitively adversarial fail these necessary conditions. In particular, adversarial COGs are neither \emph{harmonic} nor \emph{strictly-competitive}. Harmonic games~\citep{candogan2011flows,legacci2024no} generalize the classical \tpzs{} definition in a way that relies on only the weighted \emph{response graph}, a graph of joint action nodes with arrows indicating favorable (including ties) deviations for players. Strictly competitive games~\citep{adler2009note} are \tpzs{} up to shift and scale of each player's payoffs.

We provide a practical example of this phenomenon in Figure~\ref{fig:adv_game} motivated by game-theoretic evaluation~\citep{liure,marris2025deviation}, specifically \emph{Nash averaging}~\citep{balduzzi2018re}. In that setting, one often considers games where it is unclear how to compare performance on one task X (e.g., measured with perplexity) with another task Y (e.g., measured with accuracy), whereas ranking models on a common task is clear.

\begin{figure*}[ht!]
    \centering
    \caption{Adversarial COGs are not necessarily strictly competitive, nor harmonic, however, they are preference-zero-sum. Preferences over agents for each task are indicated in~(\subref{fig:table_agent}) by \textcolor{goldcolor}{gold}, \textcolor{silvercolor}{silver}, and \textcolor{bronzecolor}{bronze}. Preferences over tasks are directly opposed and are indicated in~(\subref{fig:table_task}) by $0$ (not preferred) and $1$ (preferred). The gold arrow in~(\subref{fig:table_agent}) highlights that the agent player can improve their outcome by switching from agent C (silver) to agent A (gold) when playing task X. Assuming Borda as the voting rule (and the payoffs induced in its associated NFG representation), the corresponding black arrow in~(\subref{fig:table_task}) shows the task player is indifferent (although recall preferences over agents are not explicitly represented in a COG for the task player). Hence, this game is not strictly competitive either. The associated (weighted) response graph is shown in~(\subref{fig:response_graph}); weights are calculated again assuming Borda. \appxref{app:zero_sum:harmonic} shows that there does not exist a node weighting that makes the net flow at every node zero, ruling out the game as \emph{harmonic}. However, if one reverses all horizontal edges in the graph (gray), it is possible to construct a common payoff game (values in gray) with the same (\emph{unweighted}) response graph, proving the game is \emph{preference}-zero-sum.} %
    \label{fig:adv_game}

    \begin{subfigure}[t]{0.32\textwidth} %
        \centering %
        \caption{Agent Preferences} %
        \label{fig:table_agent} %
        \vspace{0.5cm}
        \begin{tabular}{|c||c|c|} %
        \cline{2-3} %
        \multicolumn{1}{c|}{} & \multicolumn{2}{c|}{\textbf{Tasks}} \\ %
        \hline
        \textbf{Agents} & \textbf{X} & \textbf{Y} \\ %
        \hline
        \textbf{A} & \gold \tikzmark{ax_a} & \silver \\
        \hline
        \textbf{B} & \bronze & \gold \\
        \hline
        \textbf{C} & \silver \tikzmark{cx_a} & \bronze \\
        \hline
        \end{tabular}
        \begin{tikzpicture}[overlay, remember picture]
            \coordinate (startPoint) at ($(cx_a.west) - (0.1,0.0cm)$);
            \coordinate (endPoint)   at ($(ax_a.west) - (0.1,0.0cm)$);
            \draw[
                ->,
                goldcolor,
                very thick,
                line cap=round,
                out=135,
                in=225,
                looseness=1.5
            ] (startPoint) to (endPoint);
    
        \end{tikzpicture}
    \end{subfigure}%
    \hfill %
    \begin{subfigure}[t]{0.32\textwidth} %
        \centering %
        \caption{Task Preferences} %
        \label{fig:table_task} %
        \vspace{0.5cm}
        \begin{tabular}{|c||c|c|} %
        \cline{2-3} %
        \multicolumn{1}{c|}{} & \multicolumn{2}{c|}{\textbf{Tasks}} \\ %
        \hline
        \textbf{Agents} & \textbf{X} & \textbf{Y} \\ %
        \hline
        \textbf{A} & 0 \tikzmark{ax_t} & 1 \\
        \hline
        \textbf{B} & 1 & 0 \\
        \hline
        \textbf{C} & 0 \tikzmark{cx_t} & 1 \\
        \hline
        \end{tabular}
        \begin{tikzpicture}[overlay, remember picture]
            \coordinate (startPoint) at ($(cx_t.west) - (0.15,0.0cm)$);
            \coordinate (endPoint)   at ($(ax_t.west) - (0.15,0.0cm)$);
            \draw[
                ->,
                black,
                very thick,
                line cap=round,
                out=135,
                in=225,
                looseness=1.5
            ] (startPoint) to (endPoint);
    
        \end{tikzpicture}
    \end{subfigure}
    \begin{subfigure}[t]{0.32\textwidth} %
        \centering %
        \caption{(Weighted) Response Graph} %
        \label{fig:response_graph} %
        \vspace{-0.5cm}
        \begin{tikzpicture}[
            scale=0.8, transform shape,
            node distance=5mm and 10mm,
            mynode/.style={
                circle,       %
                draw,         %
                thick,        %
                minimum size=1mm, %
                inner sep=1pt,
                font=\bfseries %
            },
            myedge/.style={
                ->,           %
                thick,        %
                black,        %
                shorten >=2pt, %
                shorten <=2pt  %
            },
            myreverseedge/.style={
                <-,           %
                dashed,
                thick,
                gray,
                shorten >=2pt,
                shorten <=2pt
            },
            mylabel/.style={
                font=\small,  %
                midway,       %
                fill=none,   %
                inner sep=1pt %
            },
            extlabel/.style={
                font=\tiny, %
                gray,
                anchor=south west, %
                inner sep=0.5pt,   %
                outer sep=0pt      %
            }
            ]
            \node[mynode] (AX) {AX};
            \node[mynode, right=of AX] (AY) {AY}; %
            \node[mynode, below=of AX] (BX) {BX};
            \node[mynode, right=of BX] (BY) {BY}; %
            \node[mynode, below=of BX] (CX) {CX}; %
            \node[mynode, right=of CX] (CY) {CY};

            \draw[myedge] (AX) to node[mylabel, above]{+1} (AY); %
            \draw[myreverseedge, bend right=15] (AX) to node[mylabel, below]{} (AY);
            \draw[myedge] (BX) to node[mylabel, right]{+2} (AX);
            \draw[myreverseedge, bend left=25] (BX) to node[mylabel, below]{} (AX);
            \draw[myedge, bend left=135, black] (CX) to node[mylabel, left, pos=0.5]{+1} (AX);
            \draw[myreverseedge, bend left=50] (CX) to node[mylabel, left, pos=0.5]{} (AX);
            \draw[myedge] (AY) to node[mylabel, left, pos=0.4]{+1} (BY);
            \draw[myreverseedge, bend left=25] (AY) to node[mylabel, below]{} (BY);
            \draw[myedge, bend right=135, black] (CY) to node[mylabel, right, pos=0.5]{+1} (AY);
            \draw[myreverseedge, bend right=50] (CY) to node[mylabel, left, pos=0.5]{} (AY);
            \draw[myedge] (BX) to node[mylabel, right, pos=0.4]{+1} (CX);
            \draw[myreverseedge, bend right=25] (BX) to node[mylabel, below]{} (CX);
            \draw[myedge] (BY) to node[mylabel, above]{+1} (BX);
            \draw[myreverseedge, bend left=15] (BY) to node[mylabel, below]{} (BX);
            \draw[myedge] (CY) to node[mylabel, left, pos=0.4]{+2} (BY);
            \draw[myreverseedge, bend right=25] (CY) to node[mylabel, below]{} (BY);
            \draw[myedge] (CX) to node[mylabel, above]{+1} (CY);
            \draw[myreverseedge, bend right=15] (CX) to node[mylabel, below]{} (CY);

            \node[extlabel] at (AX.north east) [xshift=0pt, yshift=1pt] {2};
            \node[extlabel] at (AY.north west) [xshift=-3pt, yshift=1pt] {1};
            \node[extlabel] at (BX.north east) [xshift=0pt, yshift=1pt] {0};
            \node[extlabel] at (BY.north west) [xshift=-3pt, yshift=1pt] {2};
            \node[extlabel] at (CX.north east) [xshift=0pt, yshift=1pt] {1};
            \node[extlabel] at (CY.north west) [xshift=-3pt, yshift=1pt] {0};
        \end{tikzpicture}
    \end{subfigure}
\end{figure*}

\subsubsection{Harmonic}\label{app:zero_sum:harmonic}

A finite (cardinal) game is harmonic~\citep[Def. 1]{legacci2024no} when it admits a collection of action weights $\beta_{i,a_i} \in (0, \infty)$, $a_i \in \mathcal{A}_i$, $i \in [N]$, such that
\begin{align}
    \sum_i \sum_{a'_i \in \mathcal{A}_i} \beta_{i, a_i'} [ u_i(\va) - u_i(a_i', a_{-i}) ] = 0 \quad \forall \va \in \mathcal{A}.
\end{align}

In other words, there exists a weighting of the weighted response graph such that the net flow at every (joint action) node is zero. The condition above is linear in the action weights, so we can simply construct a matrix $A$ of the terms $u_i(\va) - u_i(a_i', a_{-i})$ and check if there exists weights $\beta = [\beta_{1A}, \beta_{1B}, \beta_{1C}, \beta_{2X}, \beta_{2Y}]$ that set $A\beta = \mathbf{0}$, i.e., check that $A$ is full column-rank (equiv., empty null-space). In the case of Figure~\ref{fig:adv_game}, the terms $u_i(\va) - u_i(a_i', a_{-i})$ are indicated in black next to the directed edges. The rows of $A$ are ordered $[AX, AY, BX, BY, CX, CY]$:
\begin{align}
    A &= \begin{bmatrix}
         0  &  2 &  1 &  0 & -1 \\
         0  & -1 &  1 &  1 &  0  \\
        -2  &  0 & -1 &  0 &  1  \\
         1  &  0 &  2 & -1 &  0  \\
        -1  &  1 &  0 &  0 & -1  \\
        -1  & -2 &  0 &  1 &  0
    \end{bmatrix}.
\end{align}
This matrix has full column-rank, therefore its null space contains only the zeros vector. So there does not exist an action weighting $\beta$ with $\beta_{i,a_i} \in (0, \infty)$ to render the net flow zero. Therefore, it is not harmonic.

\subsubsection{Preference-Zero-Sum}

While COGs do not admit a direct payoff structure, they do provide corresponding (\emph{unweighted}) response graphs. The COG in Figure~\ref{fig:adv_game} \emph{does} satisfy the definition of preference-zero-sum~\citep{biggar2023graph} which relies only on the unweighted response graph. A preference zero-sum game's corresponding response graph has only one sink component~\citep[Corollary 4.11]{biggar2023graph}, which we can calculate in time linear in the graph $\Theta(V + E)$, polynomial in the size of the game, using, e.g., Kosaraju-Sharir's algorithm~\citep[Algorithm p. 224 \& 229]{aho1983data}. The sink component conveys information about evolution, learning, and omits strictly dominated strategies. Nevertheless, we are unaware of any efficiency results for computing equilibria only assuming the preference-zero-sum property.

Hence, two-player, zero-sum does not appear to represent a complexity boundary in COGs similarly to classical NFGs.

\subsection{Two-Player, Known-Support}

We now consider instead the even simpler problem of computing an equilibrium of a $2$-player game given we know the support.

Probabilistic voting rules are sometimes avoided simply due to humans' aversion to a randomized election outcome. However, given our context-ordinal Nash equilibrium solution concept already allows for randomized play, it seems natural to consider them.

Here, we specifically consider maximal lotteries as the underlying voting rule of a $2$-player COG. Maximal lotteries (ML) are a probabilistic, Condorcet consistent voting rule, i.e., they select the action that beats every other action in a heads up comparison if one exists; otherwise, they specify a distribution over candidates that will be preferred to every other voting rule by a majority of voters in expectation. The maximal lottery problem can be formulated as a symmetric, \tpzs{} bi-matrix game ($\min_x \max_y x^\top A y$) where each entry in $A$, called the \emph{margin} matrix, equals the net frequency with which one candidate was ranked above another in the population of votes.
Here, we assume the NE has full support. Therefore, the maximum lottery best response must return a fully-mixed result.

In a COG, the margin matrix depends on the population of votes through the other player's strategy. Let the margin matrix be $A_i \in \mathbb{R}^{\vert \mathcal{A}_i \vert \times \vert \mathcal{A}_{i} \vert}$. $A_i$ here depends on $x_{-i}$, i.e., $A_i = A_i(x_{-i})$. The value of a symmetric, zero-sum game is zero and all actions in the support achieve this value at the NE, so we know that $A_i y = A_i x_i = u^{ML}_i(x_i) = \mathbf{0}$; we are looking for a symmetric NE of that game so $y = x = x_i$. It can be shown that
$A_i x_i = [x_{-i}^\top W_{i\ell} x_i]_{\ell} = \mathbf{0}$
where $W_{i\ell}$ is a constant matrix dependent only on the fixed preference data $\pref_i$ and $\ell \in [\vert \mathcal{A}_i \vert]$. Therefore, we are looking for an $x_i$ and $x_{-i}$ that satisfy
$x_{-i}^\top W_{i\ell} x_i = 0 \,\, \forall i,\ell$ subject to $x_{i}$ restricted to the given support for each player.
Empirically, we find that the resulting quadratic constraints are generally not convex (the relevant matrices are not positive semi-definite). This ultimately results in a system of quadratic equality constraints (QCQP), an NP-hard problem~\citep{pardalos1991quadratic}. Note a solution exists; we state QCQP complexity to express the general difficulty of computing its solution. %

\section{Metrics}\label{app:approx}

\subsection{Meta-Game Analysis Continued}

The meta-game analysis in Section~\ref{sec:approx:winrate} suggested measuring regret as the probability of electing a fixed hindsight action over the online algorithm for fixed regularization parameters $p$.

Alternatively, we can deploy our regularized best response to plot a curve showing the probability of selecting one of the algorithms ($x_{i,t}$ or $z_i^*$) as we vary the noise $p$ from $0$ to $1$ with $\mu_i$ as uniform and $q \ll 1$ held fixed. Area between the curve and a constant line at $0.5$ would provide a notion of how strong the selection bias is towards one algorithm or another. We present the probability of hindsight being selected by $\br_i^{(p,q=0.1,\mu_i=[\sfrac{1}{2},\sfrac{1}{2}])}$ in Figure~\ref{fig:regret_appx} along with this accompanying area metric in Figure~\ref{fig:auc_appx}.

\subsection{Game Space}\label{sec:approx:mov}

\begin{figure*}
    \centering
    \begin{subfigure}[t]{0.24\textwidth}
        \centering %
        \caption{Exploitability} %
        \label{fig:eps_appx} %
        \includegraphics[width=1.0\linewidth]{figures/eps_atari.png}
    \end{subfigure}
    \hfill
    \begin{subfigure}[t]{0.24\textwidth}
        \centering %
        \caption{Hindsight Winrate} %
        \label{fig:regret_appx} %
        \includegraphics[width=1.0\linewidth]{figures/prob_atari.png}
    \end{subfigure}
    \hfill
    \begin{subfigure}[t]{0.24\textwidth}
        \centering %
        \caption{AUC of~(\subref{fig:regret_appx})} %
        \label{fig:auc_appx} %
        \includegraphics[width=1.0\linewidth]{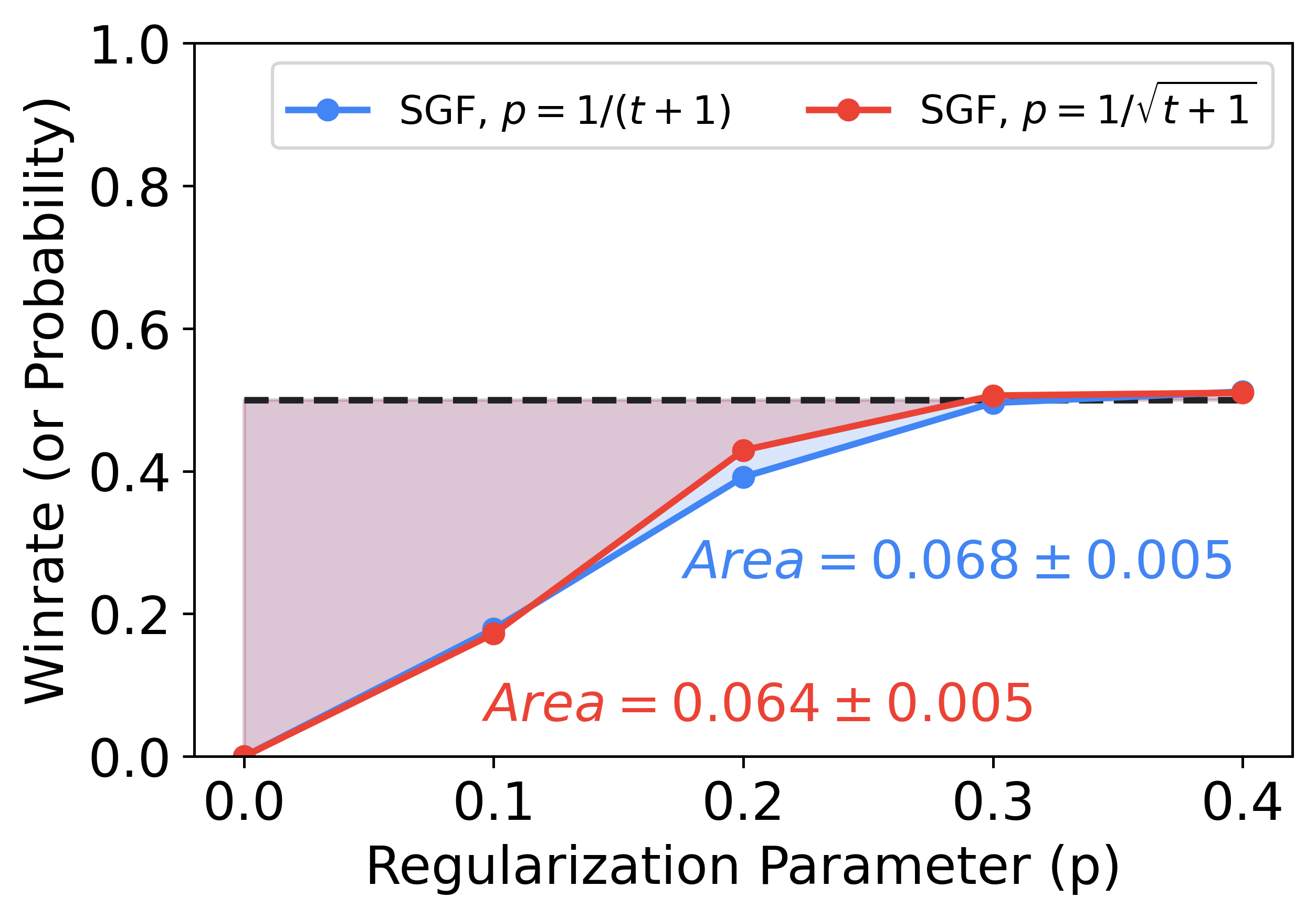}
    \end{subfigure}
    \hfill
    \begin{subfigure}[t]{0.24\textwidth}
        \centering %
        \caption{Hindsight MoV} %
        \label{fig:mov_appx} %
        \includegraphics[width=1.0\linewidth]{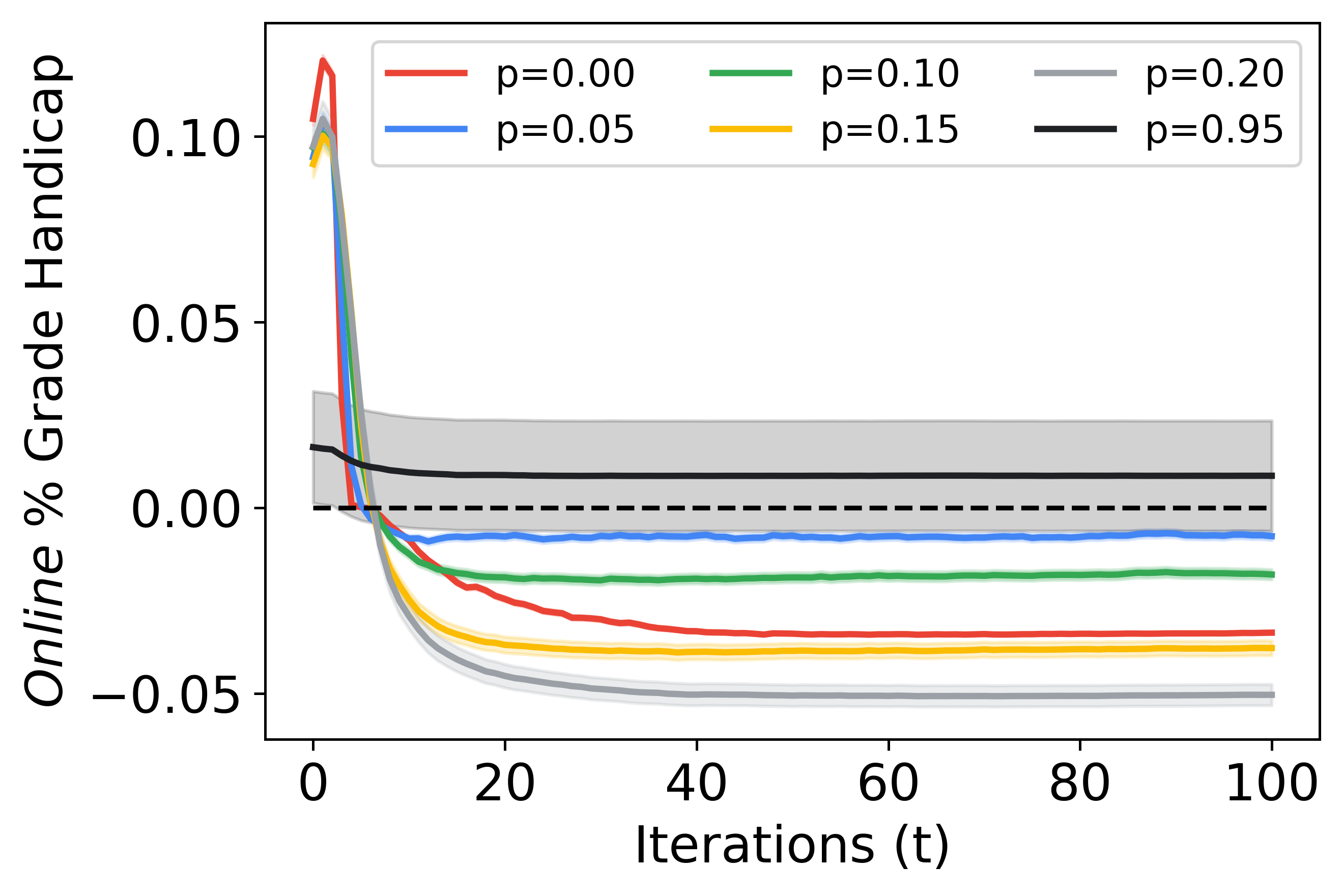}
    \end{subfigure}
    \caption{We evaluate our (SGF-based) FTRL approach ($p_s$ is the \emph{solver}'s parameter) on the Atari evaluation game according to the metrics in Section~\ref{sec:approx} and~\appxref{app:approx}. (\subref{fig:eps_appx}) EMD as defined in Section~\ref{sec:approx:strat}, eqn.~\peqref{eqn:emd}; (\subref{fig:regret_appx},\subref{fig:auc_appx}) Hindsight winrate and area measurement from Section~\ref{sec:approx:winrate}\textemdash (\subref{fig:auc_appx}) plots the \tikzcircle{2pt}'s seen in (\subref{fig:regret_appx}) at $t=100$ with $p$ on the $x$-axis; (\subref{fig:mov_appx}) margin of victory (MoV) as described in Section~\ref{sec:approx:mov}. Panels~(\subref{fig:regret_appx}) and~(\subref{fig:mov_appx}) set $p_s=\sfrac{1}{t+1}$. FTRL can also be considered a \emph{smoothed}-FP approach.}
    \label{fig:atari_regret_appx}
\end{figure*}

The classical approach \peqref{eqn:classical_eps_i} can be reinterpreted as taking the game as fixed, then finding the closest strategy $z$ to $x_i$ that achieves optimality (rationality) for player $i$, and finally reporting the difference between those two strategies in payoff space. Instead, we can take the strategy $x_i$ as fixed, find the ``closest'' game such that player $i$'s current strategy is rationalized, and then report the difference between those two games. This dual approach, also considered in other works~\citep{daskalakis2024charting,gemp2022d3c}, is actually analogous to the standard view taken in social choice. Specifically, \emph{margin of victory} counts the number of votes that must be altered for a given candidate to win the election.

In our lottery / infinite voter population model, it is easy to solve; we simply calculate what proportion of votes in the population need to be altered for candidates to tie. 
Moreover, this concept extends to the online setting via the meta-game approach described above by measuring how many votes must be altered in the vote population generated across all $T$ rounds. Figure~\ref{fig:mov_appx} displays this approximation metric.

\section{Algorithms}\label{sec:alg}

We discuss several practical learning algorithms inspired by work on normal-form games.
A best response operator regularized towards strategy $\mu_i$ with regularizer $R(z, \mu_i)$, $p_t > 0$,
\begin{align}
    \br_i^{(p_t,q,\mu_i)}([x_{-i,t}]) &= \argmax_{z}\{ -p_t R(z, \mu_i) + \bar{u}_{i,t}(z, [x_{-i,t}]) \} \label{eqn:br_reg}
\end{align}
forms the crux of many of these techniques including:
\begin{itemize}
    \item Follow the regularized leader~\citep{mcmahan2017survey,suggala2020follow} is a no-regret learning algorithm with time-average convergence to coarse-correlated equilibria~\citep{gordon2008no}; set $x_{i,0} = \argmin_z R(z, \mu_i)$; then
    \begin{align}
        x_{i,t+1}
        &= \argmax_{z}\{ -\frac{1}{t+1} R(z, \mu_i) + \sum_{t'=1}^t \frac{1}{t+1} u_{i,t'}(z, x_{-i,t}) \}
        \\ &= \argmax_{z}\{ -\frac{1}{t+1} R(z, \mu_i) + \bar{u}_{i,t}(z, [x_{-i,t}]) \}
    \end{align}
    which, c.f.~\peqref{eqn:br_reg}, suggests $\br^{(p_t,q,\mu_i)}(x_{-i,t})$ with $p_t = 1/(t+1)$;
    \item Fictitious play is a related algorithm in which each player best responds to the historical play of its co-players~\citep{brown1951iterative,robinson1951iterative}; smooth (regularized) variants enjoy regret guarantees~\citep{hofbauer2002global,fudenberg1995consistency,benaim2013consistency,benaim1999mixed},
    \item Homotopy~\citep{gemp2022sample} and adaptive regularization~\citep{sokotaunified,perolat2021poincare} methods all guide the algorithm through solving a curricula of games (using regularized best responses) that ends at the solution of the original game of interest.
\end{itemize}

And, in the case where a voting rule (e.g., scoring rule) induces a traditional normal-form game (see~\appxref{app:cog:nfg}), we can re-use any normal-form game solver as well. We examine both the follow the regularized leader (FTRL) and homotopy style of approaches in Section~\ref{sec:exp}.

\section{Visualization}\label{sec:viz}

The ability to visualize and analyze an equilibrium solution is important for diagnostics, particularly in game-theoretic evaluation~\citep{liure,marris2025deviation}. In these works, an action $a_i$'s rating is its expected payoff at equilibrium, and visualizations are given to show how each of another player's actions $a_j$ contribute to the rating of $a_i$. Here, we aim to provide analogous tools for COGs despite the lack of utility functions.

As mentioned earlier, some voting rules induce NFGs, for which we can reuse prior visualization and analysis techniques. For the more general case, we can consider breaking down the probability mass placed on an action $a_i$ at equilibrium into its contributions from each of player $j$'s actions. To do so, we leverage techniques in cooperative game theory, specifically Shapley values, although other power indices are possible. Shapley values satisfy an efficiency property that ensures the sum of the contributions from each action $a_j$ equals the probability mass placed on $a_i$ ($x_{i,a_i}$). The key primitive used in defining a Shapley value, and cooperative game theory general, is the characteristic function which acts on sets of actions. We define the characteristic function, $c_i$ applied to a subset of actions, $\hat{\mathcal{A}}_j \subseteq \mathcal{A}_j$, to be the probability mass placed on $a_i$ in player $i$'s best response when only actions $\hat{\mathcal{A}}_j$ are available to player $j$; we select the uniform distribution over winning actions as the unique best response value.
We define player $j$'s strategy over $\hat{\mathcal{A}}_j$ to be proportional to its original distribution over $\hat{\mathcal{A}}_j$ (i.e., normalized to sum to $1$), akin to omitting abstained votes.

For example in Chicken, the Shapley value breakdown of the final expected rank for the row player's actions at NE, $[0.5, 0.5]$, are:
\begin{align}
\begin{bmatrix}
 & $\texttt{swerve}$ & $\texttt{straight}$
\\ $\texttt{swerve}$ & $-0.25$ & $0.75$
\\ $\texttt{straight}$ & $0.75$ & $-0.25$
\end{bmatrix}.
\end{align}
When the row player swerves, if the column player swerves, the row player could have achieved a better outcome by going straight, so the column player's decision to swerve contributes negatively ($-0.25$) to the row player's rating of swerve. On the other hand, if the column player goes straight, the row player's decision to swerve achieves a much better outcome than if the row player had chosen to go straight which would have resulted in a collision.

\section{Experiments}

\subsection{Atari}\label{appx:atari}

In the Atari experiments, the two sources of stochasticity arise from a) the random usurper ballots and b) the Dirichlet sampling process described in Section~\ref{sec:reg}. Recall that this noise is used to regularize and render the best response operator continuous. We use this regularized best response operator $\br_i^{(p,q,\mu_i)}$ both as part of the FTRL-inspired update and to evaluate the learned strategy profile (see Section~\ref{sec:approx}). We fix a \texttt{solver\_seed} for the random noise generated for the FTRL solver as well as a separate \texttt{eval\_seed} to evaluate the solution returned by the solver. Every time the regularized best response operator is evaluated, we sample a best response according to the procedure described in Section~\ref{sec:approx} (and Algorithm~\ref{alg:reg_br}) \texttt{num\_samples} times; we then average the sampled best responses to give the regularized best response. Table~\ref{tab:atari_hyps} lists the hyperparameters used in the Atari experiments. Figure~\ref{fig:atari_regret} displays the mean and standard error over this set of random experiments.

\begin{table}[hb]
    \centering
    \begin{tabular}{c|c|c|c|c|c} \toprule
         & $q$ & $\mu_i$ & \# of \texttt{solver\_seed}s & \# of \texttt{eval\_seed}s & \texttt{num\_samples} \\ \midrule
        Figures~\ref{fig:eps} \&~\ref{fig:auc_appx} & $0.1$ & uniform & $10$ & $1000$ & $100$ \\
        Figures~\ref{fig:regret} \&~\ref{fig:mov_appx} & $0.1$ & uniform & $10$ & $100$ & $100$ \\ \bottomrule
    \end{tabular}
    \caption{Atari Hyperparameters.}
    \label{tab:atari_hyps}
\end{table}

\newpage
\subsection{Lost at Sea}\label{appx:lost_at_sea}

\subsubsection{Election instructions}\label{app:las_instructions}

The following reproduces the exact instructions on the election process provided to participants in the \textit{Lost at Sea} dataset from~\cite{qian2025maskmirror}.

\begin{mdframed}

Below is an overview of the election process.

\begin{enumerate}

\item \textbf{Indicating interest} - You will first be asked to indicate how much you want to become the group leader on a scale from 0 to 10.

\item \textbf{Ranking your teammates} - You will rank your three teammates, with your preferred leader at position 1, the second most preferred leader at position 2, and the third most preferred leader at position 3. You cannot vote for yourself.  

\end{enumerate}

We will use your answers to these two questions to select the leader:

\begin{itemize}

\item The two group members who express the most interest in becoming the leader will be selected as candidates for the election. If several group members choose the same number, the computer will randomly determine the order of these group members.

\item The highest-ranked candidate among the two will be elected as leader. If both candidates tie, the decision will be made randomly.
\end{itemize}

With this process, you are asked to rank your team members before knowing who the candidates are. Only the rankings of the two group members who are not candidates will be considered. This ensures that you cannot vote strategically to increase your own chances of being elected as the leader. Therefore, it is in the interest of all group members to provide their true, preferred ranking of the other group members.
\end{mdframed}

\subsubsection{Stochastic Outcomes}\label{app:chance}

To handle stochastic election outcomes, we consider the distribution defined by the following stochastic process. First sample a co-player action profile $a_{-i} \sim x_{-i}$. Then, independently sample a single (counterfactual) election outcome for every one of player $i$'s actions given $a_{-i}$. We then rank player $i$'s actions given that realized set of outcomes. We then repeat and aggregate these preferences using the chosen social choice voting rule.

\begin{figure}[h!]
    \centering
    \includegraphics[width=1.0\linewidth]{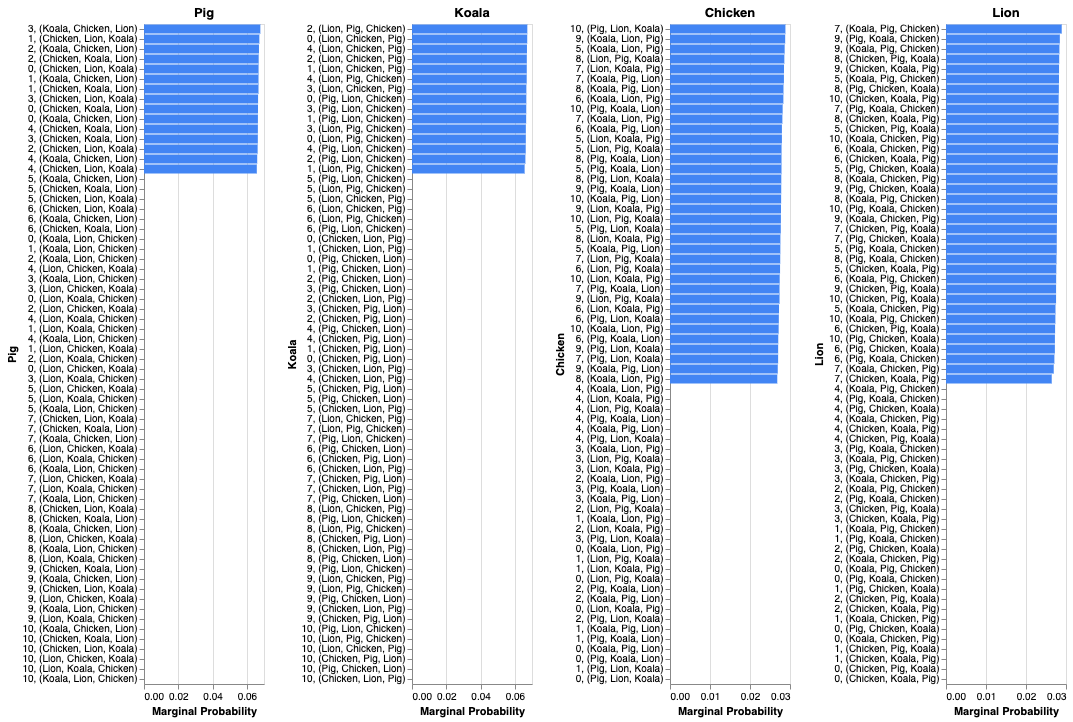}
    \caption{Election (Table~\ref{tab:not_pure_ne_group}) LLE Equilibrium.}
    \label{fig:las_game_b_eq}
\end{figure}

\subsubsection{Election Case Study: Human Play \emph{Not} in Equilibrium}

We examine the election data in Table~\ref{tab:not_pure_ne_group}, which contains both the actions the players took (\wtl{}s \& votes) as well as their \texttt{pref}s for election outcomes. We do not know whether the players' actions were sampled from a mixed strategy. For the sake of this analysis, we assume the players chose the actions listed in the table deterministically.
Under this assumption, calculations suggest the strategy profile listed in Table~\ref{tab:not_pure_ne_group} is not a maximal lottery (ML) NE.

In particular, \elephant{} is not playing a best-response. It is identified that if they reduced their \wtl{} to remove themselves from the runoff, then the two competing candidates would be \zebra{} and \turtle{} ranked 1st and 2nd by their preference rankings (\texttt{pref}). Across the players, \turtle{} wins once against \zebra{} (\dolphin{}'s vote) and loses once (\elephant{}'s vote). The winner is then either \turtle{} or \zebra{} with equal probability. This is compared to the status quo where \elephant{} enters a high \wtl{} such that \elephant{} and \zebra{} enter the runoff. \elephant{} beats \zebra{} twice (\dolphin{} and \turtle{} both rank it higher) so \elephant{} is deterministically elected, however, strangely, \elephant{} ranks itself 3rd in its own preference ranking (pref) despite its high \wtl{}.

Player \turtle{} is identified as not playing a best response by our sample best response estimate ($p=0$, $q=0.1$, $\texttt{num\_samples}=100$), but this is false. The sample best response we compute actually achieves the same value, but probabilistically. This is likely due to the sample estimate not having converged yet. The suggested best response is to submit a much higher \wtl{} (10) in order to force the runoff to be between \zebra{} and \turtle{}, ranked 0 and 2 by \turtle{}. They have equal wins so will be selected randomly giving \zebra{} an average score of 1 for the outcome regardless of the vote they pair with \wtl{}=10 since their vote cannot contain a comparison between \turtle{} (themselves) and \zebra{} by rules of the election.

Next, we demonstrate solving for an equilibrium of the voting game. In this demonstration, we use Borda as the voting rule. As mentioned earlier, Borda is a scoring rule which induces a normal-form game on the COG. We then re-use existing NFG solving techniques to approximate a limiting logit equilibrium (LLE)~\citep{mckelvey1995quantal}. We use the Jax~\citep{jax2018github} package \texttt{polarix} with \texttt{min\_temperature} $0.1$ to compute the LLE. Note that we use Borda here for efficiency sake, but it is possible to apply the same general technique to maximal lotteries with differentiable convex optimization libraries such as \textsc{cvxpylayers} in \textsc{JAX}~\citep{agrawal2019differentiable,jax2018github}.

In the LLE shown in Figure~\ref{fig:las_game_b_eq},
\turtle{} and \zebra{} spread most mass over high \wtl{} actions, so they are likely to be in the running. This means \dolphin{} and \elephant{}'s votes will have impact in the election.
Both \turtle{} and \zebra{} want themselves to win and are ranked favorably by \dolphin{} and \elephant{} although not in the same order. \dolphin{} prefers \turtle{}. \elephant{} prefers \zebra{}.
Expectedly, \dolphin{} ranks \turtle{} above \zebra{} in all its votes in the LLE support. \elephant{} ranks \zebra{} higher in all its votes. This should lead to \zebra{} and \turtle{} being elected with equal probability (confirmed empirically in simulation).

According to the players' outcome preferences (\texttt{pref}), the essential (bipartisan) set of a Maximal Lottery consists of \elephant{}, \turtle{}, and \zebra{}.
Why are \turtle{} and \zebra{} in the support of the LLE, but \elephant{} is not? \elephant{} actually prefers both \zebra{} and \turtle{} to itself, so it withdraws from the election to steer the result towards either of the two (preferably \zebra{}). The story is similar with \dolphin{}, who ranks themselves last.

\subsubsection{Election Case Study: Human Play in Equilibrium}

We similarly examine the election data in Table~\ref{tab:pure_ne_group}.
Calculations suggest each player's strategy listed in Table~\ref{tab:pure_ne_group} \emph{is} a max-entropy, maximal lottery best response to each other (i.e., a Nash equilibrium). We use the same settings as before  ($p=0$, $q=0.1$, $\texttt{num\_samples}=100$).

In particular, computing the max-entropy, maximal lottery for \bear{} reveals its best responding actions (essential set). We summarize \bear{}'s best response strategies as follows:
\begin{itemize}
    \item When \wtl{} $\le$ 6, \bear{} always ranks \camel{} over \goose{} because it is possible \bear{} is not selected in a runoff (\wtl{}=6), and therefore its vote (and \dog{}'s) matters. In this case, \camel{} wins, which is \bear{}'s ideal outcome;
    \item If \bear{}'s \wtl{} $>$ 6, then \bear{} is selected as the candidate against \camel{} and \bear{}'s vote doesn't matter (\dog{} and \goose{}'s do). \camel{} wins, which is \bear{}'s ideal outcome.
\end{itemize}

\begin{table}[t!]
    \centering
    \caption{Election A: (Pure) Strategies are in Equilibrium.}
    \label{tab:pure_ne_group}
    \begin{tabular}{l|c|c|c} \toprule
        voter & vote & \wtl{} & pref \\ \midrule
        \bearlabel{} & \ecamel{} $\succ$ \egoose{} $\succ$ \edog{} & 5 & \ecamel{} $\succ$ \egoose{} $\succ$ \ebear{} $\succ$ \edog{} \\
        \camellabel{} & \ebear{} $\succ$ \egoose{} $\succ$ \edog{} & 8 & \ecamel{} $\succ$ \ebear{} $\succ$ \edog{} $\succ$ \egoose{} \\
        \doglabel{} & \ecamel{} $\succ$ \ebear{} $\succ$ \egoose{} & 3 & \edog{} $\succ$ \ecamel{} $\succ$ \ebear{} $\succ$ \egoose{} \\
        \gooselabel{} & \ecamel{} $\succ$ \ebear{} $\succ$ \edog{} & 6 & \ecamel{} $\succ$ \ebear{} $\succ$ \egoose{} $\succ$ \edog{} \\ \bottomrule
    \end{tabular}
\end{table}

\paragraph{Borda LLE}

Computing an LLE of this election using Borda (Table~\ref{tab:pure_ne_group}), we find the following equilibrium description with equilibrium presented in Figure~\ref{fig:las_game_a_eq}.
\camel{} and \dog{} spread most of their mass over low \wtl{} actions, which removes them from the running. \bear{} and \goose{} spread their mass over high \wtl{} actions, so they are likely to be in the running.
\dog{} would prefer itself to win, but \dog{} is low-ranked by everyone. Given that others might not let \dog{} win the election, \dog{} submits a low \wtl{} to steer the election towards \bear{}, its third favorite, given \camel{} is not in the runoff.
\camel{} is high ranked by all but wants to avoid \goose{} being elected, which could happen if \camel{} and \goose{} are in a runoff. By submitting a low \wtl{}, \camel{} can influence the election away from \goose{} and towards \bear{} which is \camel{}'s 2nd ranked option.
\bear{} ranks itself lower than \goose{}, so would be happy with \goose{} winning. \goose{} essentially feels similarly, but with itself and \bear{} swapped in the ranking.
After sampling ten thousand election outcomes under the LLE, we observe that \bear{} is elected > 99\% of the time. It is interesting that the LLE arrives at \bear{} being elected because \camel{} is the strong Condorcet winner.

\begin{figure}[ht!]
    \centering
    \includegraphics[width=1.0\linewidth]{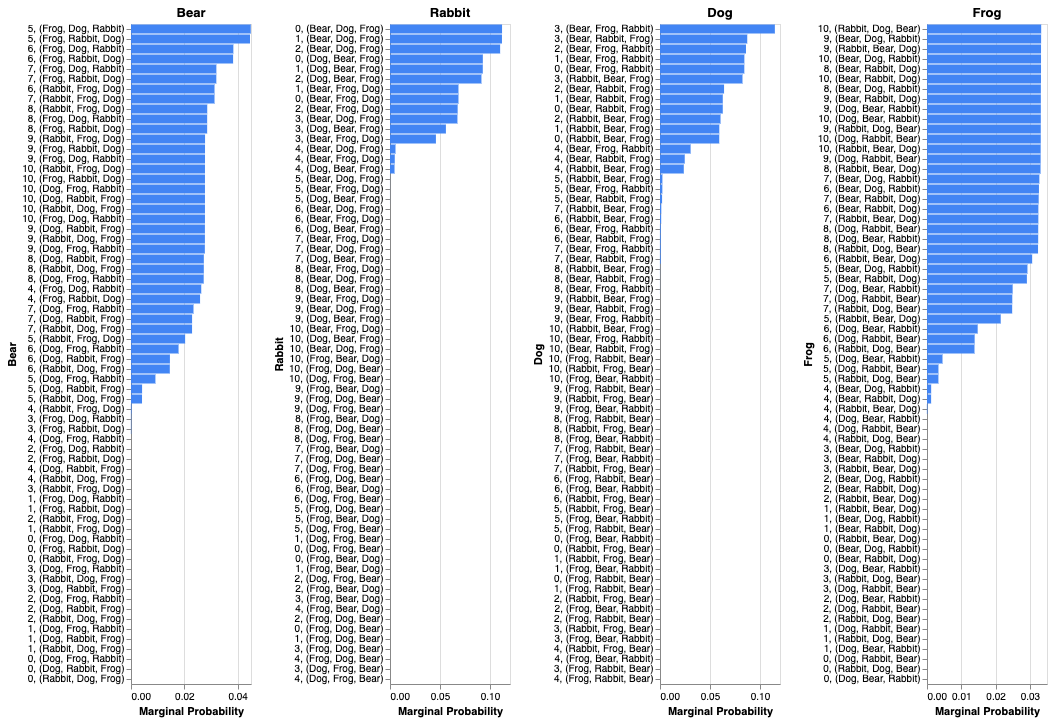}
    \caption{Election (Table~\ref{tab:pure_ne_group}) LLE Equilibrium.}
    \label{fig:las_game_a_eq}
\end{figure}

\end{document}